\newcommand{\PP}{\mathbb{P}}
\newcommand{\E}{\mathbb{E}}
\newcommand{\al}{\alpha}
\newcommand{\R}{\mathbb{R}}
\newcommand{\la}{\lambda}
\newcommand{\eps}{\epsilon}
\newcommand{\sinr}{{\sf SINR}}
\newcommand{\e}{{\sf E}}
\newcommand{\sar}{{\sf SAR}}
\newcommand{\mpe}{{\sf MPE}}
\newcommand{\qh}{{\bf h}}
\newcommand{\qw}{{\bf w}}
\newcommand{\qx}{{\bf x}}
\newcommand{\qz}{{\bf z}}
\newcommand{\qA}{{\bf A}}
\newcommand{\qB}{{\bf B}}
\newcommand{\qI}{{\bf I}}
\newcommand{\qQ}{{\bf Q}}
\newcommand{\qW}{{\bf W}}
\newcommand{\qX}{{\bf X}}
\newcommand{\qzero}{{\bf 0}}
\newcommand{\qtheta}{{\boldsymbol \theta}}
\newcommand{\tr}{\mbox{tr}}
\newtheorem{theorem}{Theorem}
\newtheorem{corollary}{Corollary}
\newtheorem{lemma}{Lemma}
\newcommand{\be}{\begin{equation}} \newcommand{\ee}{\end{equation}}
\newcommand{\bea}{\begin{eqnarray}} \newcommand{\eea}{\end{eqnarray}}
\begin{document}
\title{Design and Analysis of SWIPT\\ with Safety Constraints}

\author{Constantinos Psomas, \IEEEmembership{Senior Member, IEEE}, Minglei You, Kai Liang,\\ Gan Zheng, \IEEEmembership{Fellow, IEEE}, and Ioannis Krikidis, \IEEEmembership{Fellow, IEEE}

\thanks{C. Psomas and I. Krikidis are with the Department of Electrical and Computer Engineering, University of Cyprus, Nicosia, Cyprus (email: \{psomas, krikidis\}@ucy.ac.cy).}
\thanks{M. You is with the Department of Electrical and Electronic Engineering, University of Nottingham, Nottingham, NG7 2RD, UK (email: minglei.you@nottingham.ac.uk)}
\thanks{K. Liang is with the State Key Laboratory of Integrated Service Networks, Xidian University, Xi’an 710071, China (email: kliang@xidian.edu.cn).}
\thanks{G. Zheng is with the Wolfson School of Mechanical, Electrical and Manufacturing Engineering, Loughborough University, Leicestershire, LE11 3TU, UK (email: g.zheng@lboro.ac.uk).}
\thanks{The work of C. Psomas and I. Krikidis was supported by the
European Research Council (ERC) through the European Union's Horizon
2020 Research and Innovation Programme under grant agreement 819819 (APOLLO). The work of K. Liang was supported by the National Natural Science Foundation of China (61901317), the Fundamental Research Funds for the Central Universities (JB190104) and the Joint Education Project between China and Central-Eastern European Countries (202005). The work of G. Zheng was supported in part by the UK Engineering and Physical Sciences Research Council (EPSRC) under grants EP/N007840/1.}}

\maketitle

\begin{abstract}
Simultaneous wireless information and power transfer (SWIPT) has long been proposed as a key solution for charging and communicating with low-cost and low-power devices. However, the employment of radio frequency (RF) signals for information/power transfer needs to comply with international health and safety regulations. In this paper, we provide a complete framework for the design and analysis of far-field SWIPT under safety constraints. In particular, we deal with two RF exposure regulations, namely, the specific absorption rate (SAR) and the maximum permissible exposure (MPE). The state-of-the-art regarding SAR and MPE is outlined together with a description as to how these can be modeled in the context of communication networks. We propose a deep learning approach for the design of robust beamforming subject to specific information, energy harvesting and SAR constraints. Furthermore, we present a thorough analytical study for the performance of large-scale SWIPT systems, in terms of information and energy coverage under MPE constraints. This work provides insights with regards to the optimal SWIPT design as well as the potentials from the proper development of SWIPT systems under health and safety restrictions.
\end{abstract}

\begin{IEEEkeywords}
Wireless power transfer, SWIPT, safety regulations, specific absorption rate, maximum permissible exposure.
\end{IEEEkeywords}

\section{Introduction}
Wireless technologies are an important part of modern society, with Cisco expecting that global mobile subscribers will reach 5.7 billion by 2023, which will correspond 71 percent of the global population \cite{CISCOAIRWP2020}. Traditionally, the focus of wireless communications is mainly on how to improve the efficiency of information transfer. Nonetheless, with the development of wireless systems employing a massive number of devices, such as sensors and actuators, recently this focus has also been shifted towards energy sustainability. In particular, the fact that radio frequency (RF) signals can also convey energy apart from information, the concept of wireless power transfer (WPT) and, in particular, of simultaneous wireless information and power transfer (SWIPT) is considered as a very promising and enabling technology for the realization of such future systems \cite{Rui-TCOM-17}. The key idea of SWIPT is to exploit the received RF signals in order to extract not only information but energy as well. The extraction can be done by separating the energy harvesting (EH) and information decoding operations either in time, in power or in space \cite{KRI}. In contrast to conventional EH techniques (e.g. from renewable sources), EH with SWIPT can be a dedicated, continuous, controllable and on-demand process. The EH in this case, is achieved through the employment of a rectifying antenna (rectenna) that converts the received RF signal to direct current (DC) \cite{Rui-TCOM-17,KRI}.

In light of the wide and increasing usage of wireless devices, there is a growing concern with regards to the RF radiation brought about by multiple and concurrent wireless transmissions. Some studies have also revealed the potential biological hazard in relation to the RF radiation, including metabolic changes in brain and carcinogenic effects \cite{volkow2011effects, international2011iarc}. As such, international health and safety regulations have been put in place in order to regulate and limit the level of RF exposure to humans \cite{ALOUINI}. Two widely adopted regulations/measures on RF exposure are the so-called maximum permissible exposure (MPE) and specific absorption rate (SAR). The MPE or power density (measured in W/m$^2$), defines the highest level of electromagnetic radiation (EMR) in a specific area that will not incur any health/biological effect. On the other hand, SAR (measured in W/kg) is a localized metric and defines the maximum level of absorbed power in a unit mass of human tissue.

Due to the vital impact on applications with WPT, MPE is concerned by some relevant studies \cite{safe_charging,safe_distributed,scheduling_charging,Dai-InfoCom-18}. The problem of scheduling the power chargers is investigated in \cite{safe_charging}, where the charging utility for all rechargeable devices is maximized with a constraint on EMR. The works in \cite{safe_distributed} and \cite{scheduling_charging} deal with the problem of maximizing the harvested energy and wireless charging tasks scheduling, respectively, when transmitted signals guarantee a well-defined EMR constraint. The work in \cite{Dai-InfoCom-18} builds an empirical probabilistic wireless charging and EMR model and then formulates an optimization problem to maximize the charging utility of all EH devices with consideration on the EMR constraint. For short distances (i.e., distances less than $20$ cm), the SAR measure dominates the RF exposure. Therefore, in this case, SAR becomes a more critical factor than the MPE for the design of efficient communications. Some regulatory agencies have established limitations on the body SAR exposure. For instance, the Federal Communications Commission (FCC) enforces a SAR limitation of $1.6$ W/kg averaged over one gram of tissue on the partial body exposure \cite{FCC-01}, and the Comit\'e Europ\'een de Normalisation \'Electrotechnique adopts a similar limitation of $2$ W/kg averaged over $10$ grams of tissue on the SAR measurements \cite{ULC}. In addition, SAR needs to account for various exposure constraints on the whole body, partial body, hands, wrists, feet, ankles, etc., with various measurement limitations according to FCC regulations \cite{FCC-01}. Thus, multiple SAR constraints are needed even for a single transmit device. For instance, the iPhone 12 Pro Model A2407 has a whole body SAR of $1.18$ W/kg and a head SAR of $1.14$ W/kg \cite{iphone}.

Although the subject of an RF exposure constraint (i.e., the SAR constraint) has an important impact on the design of wireless communication systems, few existing works in the literature have considered SAR regulations. The SAR exposure limitation can be easily guaranteed in single-antenna systems by introducing an additional transmit power constraint (i.e., setting the transmit power below a required threshold). In the meantime, the exploitation of multiple antennas provides significant benefits in improving the throughput of a wireless communication system, but it also poses potential challenges during the design due to RF radiation restrictions. For example, by exploiting advanced signal processing techniques such as beamforming in multi-antenna systems, the pattern of the RF signals is manipulated in such a way as to increase the performance, while it also makes the associated SAR analysis more complicated. Measurements and simulations are carried out in \cite{Murch-04} and demonstrate that SAR is a function of the phase difference between two transmit antennas. The SAR reduction and modeling in multi-antenna systems is studied in \cite{Qiang-07,Mahmoud-08}. The SAR constraints are integrated into the transmit signal design for a multiple-input multiple-output (MIMO) uplink channel in \cite{Hochwald-12}, where the quadratic model for the SAR measurements is first proposed. In \cite{Hochwald-14}, a SAR code is proposed to improve the conventional Alamouti space-time code under SAR constraints. It is also revealed that the SAR measurement is a function of the quadratic form of the transmitted signal with the SAR matrix. The SAR-aware beamforming and transmit signal covariance optimization methods are presented in \cite{Ying-CISS-13} and \cite{Ying-Globecom-13}. Capacity analysis with multiple SAR constraints on single-user MIMO systems is intensively examined in \cite{Ying-TWC-13}. Sum-rate analysis for a multi-user MIMO system with SAR constraints is performed in \cite{Ying-TWC-17}, with both perfect and statistical channel state information (CSI). From an information theory perspective, SAR-constrained multi-antenna transmit covariance optimization can be seen as the classical MIMO channel capacity optimization problem subject to generalized linear transmit covariance constraints \cite{Rui-TIT-12}.

The consideration of SAR and MPE constraints for the design of WPT or SWIPT systems is an under-explored research area \cite{Rui-TCOM-17}. Even though SWIPT corresponds to a controlled transmission of RF radiation to communicate as well as energize, it may significantly contribute to the electromagnetic pollution (electrosmog). However, few works in the literature discuss the integration of SAR and MPE with WPT \cite{safe_charging, safe_distributed, scheduling_charging, Dai-InfoCom-18} or SWIPT \cite{Zhang-20}. As such, this paper provides a complete framework for the study of SWIPT, under both SAR and MPE constraints. We first present the mathematical modeling of the SWIPT technology as well as the modeling aspects of the two safety metrics. Then, we describe methodologies for the design of simple but also complex large-scale SWIPT systems under safety constraints; the methodologies are general and can be adapted to any communication scenario. Specifically, the contributions are as follows:
\begin{itemize}
  \item We first introduce and formulate the beamforming optimization problem to maximize the harvested power in a multiple-input single-output (MISO) downlink system, subject to SAR constraints and quality-of-service requirements. We derive the  beamforming solution  by leveraging semidefinite programming and rank relaxation, and prove that this method always achieves the optimal solution.
  \item Next, a robust beamforming design is proposed  subject to SAR constraints at the receivers. In practical systems, the CSI is usually measured or estimated, while there are many factors contributing to errors, e.g. quantization errors \cite{you2021data}. In such cases, the constraints such as the user end performance, measured by the signal-to-interference-plus-noise ratio (SINR) are characterized in a statistical instead of a deterministic manner and may be violated. This challenge is traditionally addressed via robust beamforming. The major solutions to robust beamforming is to provide the worst-case guarantees or probabilistic performance guarantees, such as the semidefinite relaxation (SDR) \cite{zheng2008robust}, Bernstein-Type Inequality (BTI) method \cite{wang2014outage}, and Large Deviation Inequality (LDI) method \cite{yuan2019joint}. However, these solutions require high computational complexity, which incur large latency and they are over-preservative to account for the worst cases. The proposed design is based on a low-complexity unsupervised deep learning approach with the data augmentation technique. Our results show that it achieves significant improvement and outperforms the BTI method. 
  \item The MPE constraint is considered in SWIPT networks from a macroscopic point-of-view through the employment of stochastic geometry. Closed-form analytical expressions are derived for the probability of satisfying the MPE constraint, the information coverage probability, the energy coverage probability as well as their joint probability. The performance with and without the constraints is considered and it is shown how each metric is affected by this restriction. Moreover, we study the system's performance for different frequency bands and our results show that higher frequencies decrease the levels of RF exposure in the network.
\end{itemize}
The rest of this paper is organized as follows: Section \ref{modeling} describes the modeling of SWIPT and of the considered safety regulations. In Section \ref{swipt_sar}, the proposed design of robust beamforming for SWIPT under SAR constraints is provided together with simulation results. Section \ref{swipt_mpe} presents the performance analysis of a large-scale SWIPT network under MPE constraints. The paper concludes with Section \ref{conclusion}.

{\it Notation}: Lower and upper case boldface letters denote vectors and matrices, respectively; $[\cdot]^\dag$ is the Hermitian transpose operator; $\PP\{X\}$ and $\E\{X\}$ represent the probability and expectation of $X$, respectively; $\R^n$ denotes the $n$-dimensional Euclidean space; $\Gamma(\cdot)$ and $\Gamma(\cdot,\cdot)$ denote the complete and upper incomplete gamma function, respectively \cite{GRAD}; $\mathrm{B}(\cdot,\cdot)$ denotes the beta function \cite{GRAD}; $\jmath = \sqrt{-1}$ is the imaginary unit; $\tr(\qA)$ gives the trace of the square matrix $\qA$; $\qA \succeq \qzero$ means that the matrix $\qA$ is positive semidefinite; $\Im\{x\}$ and $\Re\{x\}$ return the imaginary and real part of $x$, respectively; $U(a,b)$ denotes the uniform distribution in the interval $[a,b]$.

\section{Modeling SWIPT and Safety Constraints}\label{modeling}

\subsection{Information and Power Transfer}
Throughout this paper, we consider multi-antenna transmitters and single-antenna receivers. Also, each receiver has SWIPT capabilities, i.e. it can decode the information but also harvest energy from the received signal simultaneously. The SWIPT technique is employed with the power splitting (PS) method such that the received signal is split into two parts: one is converted to a baseband signal for information decoding and the other is directed to the rectenna for EH and storage \cite{ZHA}. This is a mature SWIPT technique that does not require strict time synchronization between information and power transfer. Let $\rho \in (0,1)$ denote the PS parameter at a receiver. Then, $100\rho\%$ of the received power is used for decoding, while the remaining power is directed to the EH circuit. During the baseband conversion phase, additional circuit noise is present due to the phase-offsets and the circuit's non-linearities, which is modeled as an additive white Gaussian noise (AWGN) with zero mean and variance $N_C$.

Therefore, based on the PS technique considered, the SINR at a receiver can be written as
\begin{align}
  \sinr = \frac{\rho S}{\rho(N_0 + I)+N_C},
\end{align}
where $S$ is the received power of the signal of interest, $I$ is the power of the interference and $N_0$ is the variance of the AWGN component of the received signal. On the other hand, since $100(1 - \rho)\%$ of the received energy is used for rectification, the instantaneous energy harvested at a receiver is modeled by the following non-linear function, which refers to a specific excitation signal\footnote{The provided mathematical framework is not limited to this model and can be easily adapted to consider other non-linear functions, e.g. the sigmoid model \cite{NG}.} \cite{Yunfei-17}
\begin{align}\label{eqn:nonlinear}
  \e = \frac{\bar a (1-\rho) P_r + \bar b}{(1-\rho) P_r + \bar c}-\frac{\bar b}{\bar c},
\end{align}
where $P_r$ is the aggregate received signal power at the receiver and $\bar a$, $\bar b$, $\bar c$ are parameters determined by the rectification circuit through curve fitting. These parameters fully characterize the non-linear behaviour of the rectifying circuit including the maximum harvesting value, the sensitivity and slope of the output power \cite{Yunfei-17}. In other words, depending on how a rectifier is designed will correspond to a different set of parameters; in this work, we will consider $\bar a = 2.463, \bar b = 1.635$, and $\bar c = 0.826$ \cite{Yunfei-17}. Note that, in general, $\e$ (measured in Watts) should be a function of the received signal rather than just its power. In this paper, we adopt a simplified model to highlight the dependency on the power of the energy signal only, which will be discussed in the numerical results.

\subsection{Safety Constraints in Wireless Networks}
The SAR metric quantifies the deposited microwave energy at a specific point on the human body. It corresponds to the rate of energy absorption per unit mass at a specific location in the tissue \cite{Hochwald-14}. As such, SAR is a function of the induced electric field $E$ (measured in V/m), the electrical conductivity of the tissue $\sigma$ at the specific point (measured in S/m) as well as the tissue's density $\eta$ (measured in kg/m$^3$). This relation can be expressed as
\begin{align}
\sar = \frac{\sigma |E|^2}{2\eta}.
\end{align}
In order to integrate the SAR limitations into the design of SWIPT, we use a quadratic form of the transmitted signal to model the pointwise SAR value with multiple transmit antennas \cite{Hochwald-14}. This model is based on experimental studies, which showed that SAR depends significantly on the phase difference between the antennas \cite{CHIM} and thus can be characterized by a sinusoidal function of the phase difference \cite{Hochwald-12}. Let $\qx$ denote the transmitted signal and $\qQ = \E\{\qx\qx^\dag\}$ its covariance. Then, SAR can be modeled as a quantity averaged over the transmit signals with a time-averaged quadratic constraint given by
\begin{align}\label{key}
\sar &= \mathbb{E}\{\tr(\qx^\dag \qA \qx)\} = \tr(\qA \qQ) \le P,
\end{align}
where $\qA$ is the SAR matrix and $P$ is the SAR limit. The dependence of the SAR measurements on the transmitted signals can be fully described by the SAR matrix, where the entries of this matrix have units of kg$^{-1}$. Note that the SAR matrices are positive-definite conjugate-symmetric matrices, since the SAR measurements are always real positive numbers. The SAR matrix is obtained offline during SAR testing and it highly depends on the device's type of antennas, operating frequency, industrial design, etc. \cite{Hochwald-14}. Moreover, during testing, several measurements are taken for different operations and locations of the device and the one that provides the ``worst case'' is used for comparison with the SAR constraint.

Different to the SAR metric, which is a point quantity, the MPE (power density) quantifies the RF exposure over a specific area. The MPE corresponds to either the absorbed power density or the incident power density \cite{ALOUINI}. The latter is easier to be measured and thus international regulations are given in terms of the maximum incident power density values. The MPE is given by
\begin{align}
\mpe = \frac{|E|^2}{Z},
\end{align}
where $E$ [V/m] is the induced electric field and $Z = 377$ $\Omega$ is the impedance of free space \cite{ALOUINI}. In the context of wireless networks, the MPE can be evaluated by 
\begin{align}\label{mpe}
\mpe = \frac{P_t G}{4\pi d^2},
\end{align}
where $P_t$ is the transmitted power, $G$ is the antenna gain and $d$ is the distance of the measuring point from the center of the antenna \cite{FCC}.

The SAR and MPE metrics are equally important for the realization of safe (in terms of RF radiation) wireless networks, where each metric concerns a different aspect of the network. The SAR metric is considered for devices that are meant to be carried close to the body (e.g. mobile phones) \cite{Hochwald-14, CHIM} or to be ``worn'' on the body (e.g. implantable sensors) \cite{AK,SP}. Essentially, these devices will be operating at a distance of $20$ cm or less from a human body \cite{Hochwald-12}. Moreover, the devices need to abide by the FCC's SAR regulations \cite{FCC-01} and thus are tested before being commercially available. In the case of MPE, the devices under consideration are expected to be operating at a distance greater than $20$ cm from a human body (e.g. base stations) \cite{Hochwald-12}. Therefore, the MPE is evaluated based on a network of transmitting devices and all locations in the network area are required to satisfy the MPE constraint \cite{FCC}. It is clear that SAR focuses more on the device itself, whereas MPE is a network-wide safety constraint. This motivates the approach taken in the following two sections. In Section \ref{swipt_sar}, we consider a SAR-aware beamforming optimization for a simple point-to-multipoint SWIPT system and, in Section \ref{swipt_mpe}, we study the performance of a SWIPT system from a macroscopic point-of-view with MPE constraints.

\section{SWIPT with SAR Constraints:\\ Optimization of Transmit Beamforming}\label{swipt_sar}
In this section, we study SAR-aware transmit beamforming optimization with both perfect and statistical CSI.

\subsection{System Model and Problem Formulation}
We consider a MISO downlink system consisting of an $N_t$-antenna transmitter with total transmit power $P_t$ and $K$ single-antenna receivers that employ single-user detection, as shown in Fig. \ref{fig:sys}. The transmitted data symbol $s_k$ to receiver $k$ follows the Gaussian distribution with zero mean and unit variance (i.e., $\mathbb{E}\{\|s_k\|^2\}=1$), which is mapped onto the antenna array elements by the beamforming vector $\mathbf{w}_k \in \mathbb{C}^{N_t\times 1}$.	The signal design in terms of modulation, waveform and input distribution will also affect the efficiency of the RF-DC conversion \cite{CLE3,learning_signal}, but for simplicity, we do not consider these in the optimization.
	
We assume frequency non-selective block fading channels (i.e., the channel coefficients remain constant in each slot) with AWGN. Denote by $\mathbf{h}_k \in \mathbb{C}^{N_t\times 1}$ the fading coefficients between the transmitter and receiver $k$, which also captures the large-scale degradation effects such as path-loss and shadowing. The received baseband signal at the receiver $k$ can be expressed as
\begin{align}\label{sys1}
y_k = \underbrace{\mathbf{h}_k^\dag \mathbf{w}_k s_k}_{\textrm{Information signal}} + \underbrace{\sum_{j\neq k}\mathbf{h}_k^\dag \mathbf{w}_js_j}_{\textrm{Interference}} + n_k,
\end{align}
where $n_k$ denotes the AWGN component with zero mean and variance $N_0$.
	
The receivers harvest energy from the received RF signal based on the PS technique and so the SINR used for the data detection process at the $k$-th receiver is given by
\begin{equation}\label{eq gamma}
\Gamma_k = \frac{\rho_k |\mathbf{h}_k^\dag \mathbf{w}_k|^2}{\rho_k \left(N_0+\sum_{j\neq k} |\mathbf{h}_k^\dag\mathbf{w}_j|^2\right) + N_C},
\end{equation}
while the input to the RF-DC circuitry is
\begin{equation}\label{eq lambda}
\Lambda_k = (1-\rho_k) P_{r,k},
\end{equation}
where $P_{r,k}$ is the received power at receiver $k$, given by
\begin{equation}\label{eq:TotalPower}
P_{r,k} = \sum_{j=1}^K |\mathbf{h}_k^\dag \mathbf{w}_j|^2 + N_0.
\end{equation}
Finally, the $l$-th SAR constraint with a time-averaged quadratic constraint is
\begin{align}
\sar_l &= \E_{s_k}\left\{\tr\left(\sum_{k=1}^K s_k^\dag \qw_k^\dag \qA_l \mathbf{w}_k s_k\right)\right\}\nonumber\\
&= \sum_{k=1}^K \qw_k^\dag \qA_l \qw_k\le P_l,
\end{align}
where $\qA_l\succeq \qzero$ is the $l$-th SAR matrix and $P_l$ is the $l$-th SAR limit.

\begin{figure}\centering
	\includegraphics[width=0.65\linewidth]{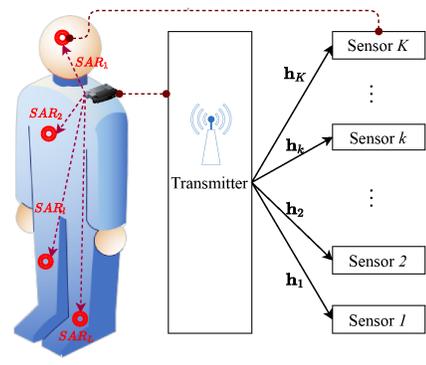}
	\caption{System model of SWIPT with SAR constraints.}\label{fig:sys}
\end{figure}

\subsubsection{Problem Formulation}
We formulate the problem as maximizing the harvested power at each user while satisfying the requirements of both SINR and total transmit power under the SAR constraints. To make the problem more tractable, we introduce a slack variable $\lambda$, and then the problem can be formulated as follows
\bea\label{eqn:prob:0}
\textbf{P1:}&& \max_{\{\qw_k, \rho_k\}} \lambda\\
\mbox{s.t.}&&
\frac{|\qh_{k}^\dag\qw_k|^2}{\sum\limits_{j=1,j \ne k}^K |\qh_k^\dag\qw_j|^2 + N_0 + \frac{N_C}{\rho_k}} \ge \gamma_k,\notag\\
&& (1-\rho_k) \left(\sum_{j=1}^K |\qh_j^\dag\qw_k|^2 + N_0 \right) \ge \lambda,\notag\\
&& 0\le \rho_k\le 1, \forall k, \notag \\
&& \sum_{k=1}^K \|\qw_k\|^2 \le P_t, \notag\\
&& \sum_{k=1}^K \qw_k^\dag \qA_l \qw_k \leq P_l, \forall l=1,\dots, L, \notag
\eea
where $\gamma_k$ is the SINR threshold at the $k$-th receiver. Clearly, \textbf{P1} is a non-convex problem because of both the SINR and EH constraints and thus is difficult to solve. In the next subsection, we develop an efficient convex optimization based algorithm that jointly optimizes the beamforming vectors and the PS parameters.
	
\subsubsection{The Optimal Solution using SDP}
We adopt the semidefinite programming (SDP) approach with rank relaxation to solve \textbf{P1}. We first define a matrix variable $\qW_k = \qw_k\qw_k^\dag$, and introduce $\tilde\lambda = \sqrt{\lambda}$ to recast the problem \textbf{P1} as follows
\bea\label{eqn:prob:1}
\textbf{P2:} && \max_{\{\qW_k, \rho_k, \tilde\lambda\}} \tilde\lambda\\
\mbox{s.t.} &&
\frac{\tr(\qh_k\qh_k^\dag\qW_k)}{\sum\limits_{j=1}^K \tr(\qh_k\qh_k^\dag\qW_j) + N_0 + \frac{N_C}{\rho_k}} \ge \frac{\gamma_k}{1+\gamma_k},\notag\\
&& \sum_{j=1}^K \tr(\qh_k\qh_k^\dag\qW_j) + N_0 \ge \frac{\tilde\lambda^2}{1-\rho_k},\notag\\
&& 0\le \rho_k\le 1, \qW_k\succeq \qzero, \forall k,\notag\\
&& \sum_{k=1}^K \tr(\qW_k) \le P_t,\notag\\
&& \sum_{k=1}^K \tr(\qA_l\qW_k) \leq P_l, \forall l=1,\dots,L.\notag
\eea
Note that the original objective value should be $\tilde\lambda^2$. The advantage of the problem \textbf{P2} is that it is convex, because it is linear in all $\{\qW_k\}$ and both terms $\frac{1}{\rho_k}$ and $\frac{1}{1-\rho_k}$ are convex in $\rho_k>0$. It can be efficiently solved using numerical software packages such as CVX \cite{cvx}. Once \textbf{P2} is optimally solved, if the resulting solutions $\{\qW_k\}$ are all rank-$1$, they are the exact optimal solutions; otherwise, the solutions only provide a lower bound for the minimum required transmit power. However, whether the SDP with rank relaxation can generate the optimal solution highly depends on the problem structure. With the additional SAR constraints, it is unknown whether this property remains true for the problem \textbf{P2}. In the following theorem, we show that this is indeed the case.

\begin{theorem}\label{theo2}
The optimal solution to \textbf{P2} satisfies $\mbox{rank}(\qW_k)=1, \forall k$, i.e., the SDP relaxation is tight, and the optimal solution to the problem \textbf{P1} can be recovered from $\{\qW_k\}$ via the eigenvalue decomposition.
\end{theorem}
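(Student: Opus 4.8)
The plan is to solve \textbf{P2} through its Karush--Kuhn--Tucker (KKT) conditions. Since \textbf{P2} is convex and admits a strictly feasible point, Slater's condition holds, strong duality is in force, and the KKT conditions are necessary and sufficient for optimality. First I would attach multipliers $\alpha_k\ge 0$ to the SINR constraints, $\beta_k\ge 0$ to the energy-harvesting constraints, $\mu\ge 0$ to the sum-power constraint, $\nu_l\ge 0$ to the $L$ SAR constraints, and a matrix multiplier $\qZ_k\succeq\qzero$ to each $\qW_k\succeq\qzero$. Writing the stationarity condition $\partial\mathcal{L}/\partial\qW_k=\qzero$ and collecting terms, I expect it to take the compact form
\begin{align}
\qZ_k = \qD - \alpha_k(1+\gamma_k)\,\qh_k\qh_k^\dag,
\end{align}
where $\qD = \mu\qI + \sum_{l}\nu_l\qA_l + \sum_{i}(\alpha_i\gamma_i-\beta_i)\qh_i\qh_i^\dag$ is a single Hermitian matrix common to all $k$, so that only the rank-one correction depends on $k$.

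The heart of the argument is the claim that $\qD\succ\qzero$. Granting this, the conclusion follows quickly: $\qZ_k$ is a positive-definite matrix minus a positive-semidefinite rank-one term, so it has at most one zero eigenvalue, i.e. $\mbox{rank}(\qZ_k)\ge N_t-1$. Complementary slackness $\qZ_k\qW_k=\qzero$ forces the range of $\qW_k$ into the null space of $\qZ_k$, whence $\mbox{rank}(\qW_k)\le N_t-\mbox{rank}(\qZ_k)\le 1$. Since $\qW_k=\qzero$ would yield zero received signal power and violate the SINR requirement $\gamma_k>0$, we have $\qW_k\neq\qzero$ and therefore $\mbox{rank}(\qW_k)=1$; the beamformer $\qw_k$ is then recovered as the scaled principal eigenvector of $\qW_k$.

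The main obstacle is establishing $\qD\succ\qzero$, because the energy-harvesting multipliers enter $\qD$ with a negative sign and the term $-\sum_i\beta_i\qh_i\qh_i^\dag$ is indefinite. Here I would exploit two facts. The stationarity condition with respect to the splitting ratio $\rho_k$ (which lies strictly inside $(0,1)$, since $\rho_k\to 0$ breaks the SINR constraint and $\rho_k\to 1$ breaks the energy constraint) reads $\alpha_k\gamma_k N_C(1-\rho_k)^2=\beta_k\tilde\lambda^2\rho_k^2$, and hence $\beta_k>0$ implies $\alpha_k>0$. Next, suppose $\qD$ were singular with null vector $\qv\neq\qzero$. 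Evaluating the quadratic form of $\qZ_k\succeq\qzero$ at $\qv$ gives $-\alpha_k(1+\gamma_k)|\qh_k^\dag\qv|^2\ge 0$, forcing $\alpha_k|\qh_k^\dag\qv|^2=0$ for every $k$. Combined with the implication above, every term of $\sum_i\beta_i|\qh_i^\dag\qv|^2$ vanishes, so the identity $\qv^\dag\qD\qv=0$ collapses to $\mu\|\qv\|^2+\sum_l\nu_l\,\qv^\dag\qA_l\qv=0$. Because each SAR matrix is positive definite, $\qA_l\succ\qzero$, this forces $\mu=0$ and $\nu_l=0$ for all $l$.

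To finish, I would rule this out by a scaling argument: uniformly scaling all $\qW_k$ by $t>1$ strictly increases every SINR and every received power while only the sum-power and SAR budgets tighten, so if neither the power nor any SAR constraint were active one could strictly increase the objective $\tilde\lambda$, contradicting optimality. Hence at least one budget constraint must be binding and carry a strictly positive multiplier, contradicting $\mu=\nu_l=0$ and establishing $\qD\succ\qzero$. I expect the delicate point to be exactly this interplay between the energy-harvesting and SINR multipliers through the $\rho_k$-stationarity, since it is what neutralizes the indefinite energy-harvesting contribution to $\qD$; the positive definiteness of the SAR matrices $\qA_l$ is the other essential ingredient.
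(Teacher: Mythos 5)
Your proposal follows essentially the same route as the paper's own proof: form the Lagrangian of \textbf{P2}, write the dual matrix attached to $\qW_k\succeq\qzero$ as a common Hermitian matrix minus a $k$-dependent rank-one term, $\qZ_k=\qD-\alpha_k(1+\gamma_k)\qh_k\qh_k^\dag$, prove that $\qD$ is nonsingular, deduce $\mbox{rank}(\qZ_k)\ge N_t-1$, and invoke complementary slackness $\qZ_k\qW_k=\qzero$ to conclude $\mbox{rank}(\qW_k)\le 1$. On one point you are in fact more rigorous than the paper: the inequality $\qv^\dag\qZ_k\qv\ge 0$ only yields $\alpha_k|\qh_k^\dag\qv|^2=0$, not $\qh_k^\dag\qv=0$ (which is what the paper asserts), and in the case $\alpha_k=0$ the negative energy-harvesting term $-\beta_k|\qh_k^\dag\qv|^2$ survives inside $\qv^\dag\qD\qv$; your implication $\beta_k>0\Rightarrow\alpha_k>0$, obtained from stationarity in the strictly interior variable $\rho_k$, is exactly what is needed to dispose of it.

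The genuine gap is your final inference: ``at least one budget constraint must be binding \emph{and carry a strictly positive multiplier}.'' Complementary slackness gives only the converse direction (a positive multiplier forces the constraint to bind); an active constraint may have a zero multiplier, so the bindingness delivered by your scaling argument does not by itself contradict $\mu=\nu_l=0$. The fact you need --- that $\mu,\nu_1,\dots,\nu_L$ cannot all vanish --- is true, and your scaling idea can be made rigorous via the global sensitivity inequality of convex duality: if $p^*(u)$ denotes the optimum of \textbf{P2} with budgets $P_t+u_0$ and $P_l+u_l$, then strong duality (Slater holds) gives $p^*(u)\le p^*(0)+\mu u_0+\sum_l\nu_l u_l$ for all $u\ge 0$; under $\mu=\nu_l=0$ this says that enlarging the budgets can never improve the optimum, while your own scaling construction (double every budget, replace $\qW_k$ by $2\qW_k$, keep $\rho_k$) shows that it strictly does. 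Alternatively, summing $\tr(\qZ_k\qW_k)=0$ over $k$ and substituting the SINR complementary slackness gives
\begin{align*}
\mu\sum_k\tr(\qW_k)+\sum_l\nu_l\sum_k\tr(\qA_l\qW_k)=\sum_k\alpha_k\gamma_k\Big(N_0+\frac{N_C}{\rho_k}\Big)+\sum_k\beta_k P'_{r,k},
\end{align*}
where $P'_{r,k}>0$ is the received signal power at user $k$; since stationarity in $\tilde\lambda$ forces $\sum_k\beta_k/(1-\rho_k)=1/(2\tilde\lambda)>0$, some $\beta_{k_0}>0$ and the right-hand side is strictly positive, so the budget multipliers cannot all vanish. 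Either repair completes your proof. For fairness, note that the paper's own proof elides the same point: its final display silently replaces $\mu\qI$ by $\qI$, i.e., it assumes exactly the positivity that is in question.
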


\begin{proof}
The proof is given in Appendix \ref{prf_thm1}.
\end{proof}
	
\subsection{Robust Beamforming Solution using Deep Learning}\label{section robust design}
It is noticed that the formulation of problem \textbf{P1} and \textbf{P2} are based on the assumption of perfect CSI. However, in practical systems, it is hard for the transmitter to obtain perfect CSI, which could be subject to CSI estimation errors. Without loss of generality, the general relation between the estimated CSI $\hat \qh_k$ and the actual CSI $\qh_k$ is described as follows
\be
\qh_k = \hat \qh_k + \Delta \qh_k, \Delta \qh_k\in \mathcal{H}_k,
\ee
where $\Delta \qh_k$ denotes the channel estimation error and $\mathcal{H}_k$ denotes the set of all potential channel estimation errors. In practice, it is not feasible to know the exact channel estimation error $\Delta \qh_k$ in prior, since $\Delta \qh_k$ is usually not a deterministic value but a random variable. The statistical characterization regarding $\Delta \qh_k$ can be available, for example via measurements and calibrations. Here, we model $\Delta \qh_k$ by using the complex Gaussian distribution with zero mean and variance $\sigma_h^2$ as follows
\be
\mathcal{H}_k \triangleq \{\Delta \qh_k | \Delta \qh_k \in \mathcal{CN}(0,\sigma_h^2\qI)\}.
\ee
Due to the uncertainty introduced by the channel estimation error, the deterministic constraints in \textbf{P1} with perfect CSI become more difficult to satisfy. The key reason for choosing the probabilistic channel estimation error model over the worst-case channel estimation error model is that, worst-case channel estimation models will lead to a worst-case study for the robust beamforming problems, where the worst-case study usually provides a very conservative performance. This is because in practical systems, in order to bound the estimation errors with a threshold, this threshold value might need to be very conservative, so that all possible estimation errors (including those extremely rare cases) are bounded. Therefore in this manuscript, instead of providing a determined bound based on the worst-case scenario, we aim to provide a statistical guarantee to the EH and SINR constraints.

Therefore, we consider the constraints in a statistical manner, where the original deterministic constraints are statistically guaranteed with a probability. By rewriting the corresponding terms of \textbf{P1} in a probabilistic form, the robust formulation of the SWIPT problem under the SINR, total power and SAR constraints with imperfect CSI can be written as
\bea\label{eqn:prob:robust 1}
\textbf{P3:}&& \max_{\{\qw_k, \rho_k,\lambda\}}~ \lambda\\
\mbox{s.t.} && \PP\{\Gamma_k \ge \gamma_k\}\ge \alpha_k,\forall k, \Delta \qh_k \in \mathcal{H}_k, \label{constraint:prob3:SINR} \\
&& \PP\{\Lambda_k \ge \lambda \}\ge \beta_k, \forall k, \Delta \qh_k \in \mathcal{H}_k,\label{constraint:prob3:EH}\\
&& 0\le \rho_k\le 1, \forall k, \notag \\
&& \sum_{k=1}^K \|\qw_k\|^2 \le P_t,\notag\\
&& \sum_{k=1}^K \qw_k^\dag \qA_l \qw_k \leq P_l, \forall l,\notag
\eea
where $\alpha_k$ and $\beta_k$ are the probability guarantees for the SINR and EH constraints, respectively. The robust formulation in \textbf{P3} is non-convex due to the probabilistic constraints, which makes the problem NP-hard and difficult to solve.
 
\subsubsection{The Bernstein-type inequality method}
We first introduce an existing technique in the literature to solve \textbf{P3}, the so-called BTI method. The BTI transforms the probabilistic constraints into a deterministic form based on the large deviation inequality for complex Gaussian quadratic vector functions, which is given in the following lemma \cite{wang2014outage, khandaker2016probabilistically}.

\begin{lemma}\label{Lemma BTI}
If the probabilistic constraint can be represented in the following form
\begin{equation}
\PP\{\qx^{\dagger}\qB\qx + 2 \Re\{\qx^\dag\qz\} + \sigma \ge 0 \} \ge 1-\rho,
\end{equation}
where $\qx$ is a standard complex Gaussian random vector with $\qx \sim \mathcal{CN}(0, \qI)$, $\qB$ is a complex Hermitian matrix, $\qz$ is a complex vector, while the tuple $(\qB, \qz, \sigma)$ forms a set of deterministic optimization variables, and $\rho \in (0,1]$ is fixed, then the following implication holds
\bea
\PP\{\qx^{\dagger}\qB\qx + 2 \Re\{\qx^\dag\qz\} + \sigma \ge 0 \} \ge 1-\rho \\
\Leftarrow \left\lbrace
\begin{matrix}\label{eqn:prob:4}
&\tr(\qB) - \sqrt{-2\ln(\rho) }\psi + \ln(\rho) \psi + \sigma \geq 0, \\
&|| \text{vec}(\qB); \sqrt{2} \qz || \le \psi,\\
&\omega \qI + \qB \succeq \mathbf{0}, \quad \psi, \omega \geq 0,
\end{matrix}
\right.
\eea
where $\psi \in \mathbb{R}$ and $\omega \in \mathbb{R}$ are slack variables, and \eqref{eqn:prob:4} is jointly convex in $\qB$, $\qz$ and $\sigma$.
\end{lemma}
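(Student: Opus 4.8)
The plan is to certify the chance constraint by a one-sided concentration bound and then absorb the two nonlinear spectral quantities into slack variables that render the certificate convex. Writing $\xi = \qx^{\dagger}\qB\qx + 2\Re\{\qx^\dag\qz\}$, the requirement $\PP\{\xi + \sigma \ge 0\}\ge 1-\rho$ is equivalent to the lower-tail statement $\PP\{\xi < -\sigma\}\le \rho$. It therefore suffices to control how far below its mean $\tr(\qB)$ the quadratic-plus-linear form $\xi$ can fall with probability at most $\rho$.

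The core ingredient is the Bernstein-type lower-tail inequality for complex Gaussian quadratic forms,
\[
\PP\!\left\{\xi \le \tr(\qB) - \sqrt{2t}\,\big\| [\text{vec}(\qB);\sqrt{2}\qz] \big\| - t\,\mu^{+} \right\} \le e^{-t},\quad t\ge 0,
\]
where $\mu^{+} = \max\{\lambda_{\max}(-\qB),0\}$. I would either invoke this from the cited references or derive it directly: diagonalize $\qB=\sum_i \mu_i \qu_i\qu_i^\dag$, use the rotational invariance of $\mathcal{CN}(0,\qI)$ to write $\xi$ as a sum of independent scalar quadratic-plus-linear terms in $\mathcal{CN}(0,1)$ variables, compute the closed-form moment generating function $\E\{e^{-\theta\xi}\}$ of this sum, and apply the Chernoff bound $\PP\{\xi\le u\}\le e^{\theta u}\E\{e^{-\theta\xi}\}$ optimized over $\theta\ge0$. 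The quantity $\big\|[\text{vec}(\qB);\sqrt{2}\qz]\big\|^2 = \|\qB\|_F^2 + 2\|\qz\|^2$ plays the role of the variance proxy, while the one-sided spectral term $\mu^{+}$ governs the curvature of the log-MGF near the boundary of the admissible range of $\theta$.

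Next I would calibrate the confidence level by setting $e^{-t}=\rho$, i.e.\ $t=-\ln\rho$ and $\sqrt{2t}=\sqrt{-2\ln\rho}$; since $-t\,\mu^{+}=\ln(\rho)\,\mu^{+}$, the tail level becomes $L = \tr(\qB) - \sqrt{-2\ln\rho}\,\big\|[\text{vec}(\qB);\sqrt{2}\qz]\big\| + \ln(\rho)\,\mu^{+}$. By monotonicity of the cumulative distribution, whenever $-\sigma\le L$ the event $\{\xi<-\sigma\}$ is contained in the tail event, so $\PP\{\xi<-\sigma\}\le\rho$; this yields the exact but nonconvex sufficient condition $L+\sigma\ge 0$. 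I would then linearize it with two slacks: an SOC slack $\psi$ with $\big\|[\text{vec}(\qB);\sqrt{2}\qz]\big\|\le\psi$, and a spectral slack $\omega$ with $\omega\ge\mu^{+}$. Because $\sqrt{-2\ln\rho}\ge0$ and $\ln\rho\le0$ for $\rho\in(0,1]$, replacing the exact quantities by the bounds $\psi,\omega$ only decreases the left-hand side, so the slack inequality implies the exact one, hence the chance constraint. The condition $\omega\ge\mu^{+}$ is precisely $\omega\ge0$ together with $\omega\qI+\qB\succeq\qzero$, and this $\omega$ is the variable carrying the $\ln(\rho)$ term in the first inequality.

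Finally, joint convexity is immediate once the slacks are in place: the first inequality is affine in $(\qB,\qz,\sigma,\psi,\omega)$ since $\tr(\qB)$, $\sigma$, $-\sqrt{-2\ln\rho}\,\psi$ and $\ln(\rho)\,\omega$ are all linear; $\big\|[\text{vec}(\qB);\sqrt{2}\qz]\big\|\le\psi$ is a second-order cone constraint; and $\omega\qI+\qB\succeq\qzero$ is a linear matrix inequality. The main obstacle is establishing the one-sided inequality with the correct spectral quantity $\mu^{+}=\max\{\lambda_{\max}(-\qB),0\}$ rather than the symmetric norm $\|\qB\|$: the lower tail of $\xi$ is driven solely by the negative eigenvalues of $\qB$, and capturing this asymmetry requires restricting the Chernoff exponent $\theta$ to the range where the MGF stays finite, which is exactly what produces the $t\,\mu^{+}$ penalty.
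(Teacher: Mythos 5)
Your proposal is correct, but it is worth noting that the paper itself offers no proof of this lemma at all: it is quoted as a known result and attributed to the cited references (Wang et al.\ and Khandaker et al.), which is where the argument you sketched actually lives. Your route --- diagonalize $\qB$, use unitary invariance of $\mathcal{CN}(\qzero,\qI)$ to reduce $\xi=\qx^\dag\qB\qx+2\Re\{\qx^\dag\qz\}$ to a sum of independent scalar terms, Chernoff-bound the lower tail to get
\begin{equation*}
\PP\Bigl\{\xi \le \tr(\qB)-\sqrt{2t}\,\bigl\|[\mathrm{vec}(\qB);\sqrt{2}\qz]\bigr\|-t\,\mu^{+}\Bigr\}\le e^{-t},
\qquad \mu^{+}=\max\{\lambda_{\max}(-\qB),0\},
\end{equation*}
calibrate $t=-\ln\rho$, and then introduce the second-order-cone slack $\psi$ and the spectral slack $\omega$ with a sign argument ($-\sqrt{-2\ln\rho}\le 0$, $\ln\rho\le 0$) to pass to the convex certificate --- is essentially the proof given in those references, so you have reconstructed the standard argument rather than found a new one. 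Two remarks. First, your version is actually a correction of the lemma as printed: the first inequality in the paper reads $\ln(\rho)\,\psi$, but the $\ln(\rho)$ coefficient must multiply the spectral slack $\omega$ (as you state), since otherwise $\omega$ appears only in the LMI $\omega\qI+\qB\succeq\qzero$ and that constraint becomes vacuous; the cited references confirm the $\ln(\rho)\,\omega$ form. Second, the only genuinely sketched step in your write-up is the tail inequality itself --- obtaining exactly the constants $\sqrt{2t}$ and $t\mu^{+}$ from the log-MGF requires some care in bounding $-\ln(1-u)-u$ over the admissible Chernoff range, which is precisely why you (and the paper) are entitled to import it from the literature rather than rederive it. Everything downstream of that inequality (equivalence of the two tail statements, monotonicity, slack substitution, and the affine/SOC/LMI convexity claim) is complete and correct as you wrote it.
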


In order to exploit the BTI method, we first apply the SDR with $\qW_k=\qw_k \qw_k^{\dagger}$ as used in the reformulation of \textbf{P2}, and then further rewrite the probabilistic constraints as follows
{\small\bea\label{eqn:prob:4re}
&&\max_{\{\qW_k, \rho_k,\lambda\}}~\lambda\notag\\
\mbox{s.t.}\hspace{-4mm}&&
\PP\bigg\{\Delta\qh_k^\dag \qQ_k \Delta\qh_k + 2\Re\{\Delta\qh_k^\dag\qQ_k\hat\qh_k\} + \hat\qh_k^\dag \qQ_k \hat\qh_k\notag\\
&&\qquad\qquad - \gamma_k\left(N_0 + \frac{N_C}{\rho_k}\right)\ge 0\bigg\}\ge \alpha_k,\forall k, \Delta \qh_k \in \mathcal{H}_k,\notag\\
&& \PP\bigg\{\Delta\qh_k^\dag \qW \Delta\qh_k + 2\Re\{\Delta\qh_k^\dag\qW\hat\qh_k\} + \hat\qh_k^\dag \qW \hat\qh_k\notag\\
&& \qquad\qquad+ N_0 - \frac{\lambda}{1-\rho_k} \ge 0 \bigg\}\ge \beta_k, \forall k, \Delta \qh_k \in \mathcal{H}_k,\notag\\
&& \tr(\qW)\le P_t,\notag \\
&& 0\le \rho_k\le 1, \qW_k\succeq \qzero, \forall k, \notag \\
&& \tr(\qA_l \qW) \leq P_l, \forall l,\notag
\eea}\vspace*{-5mm}

\noindent where we have defined $\qW\triangleq\sum_{k=1}^K \qW_k $ and $\qQ_k \triangleq \qW_k- \gamma_k\sum_{j=1, j\ne k}^K \qW_j$ for notation convenience.
	
Then, by applying the BTI method to transform the probabilistic constraints regarding the SINR and EH, we can get the following robust formulation,
\bea\label{eqn:prob:5}
\textbf{P4:} && \max_{\{\qW_k, \rho_k, x_k, z_k, \mu_k, \nu_k, \lambda\}}~\tilde\lambda \\
\mbox{s.t.} &&
~\sigma_h^2\tr(\qQ_k) - \sqrt{-2\ln(1-\alpha_k)}z_k + \ln(1-\alpha_k)x_k\notag\\
&&\hspace{19mm} + \hat\qh_k^\dag \qQ_k \hat\qh_k - \gamma_k\left(N_0 + \frac{N_C}{\rho_k}\right) \ge 0, \notag\\
&&x_k\qI + \sigma_h^2\qQ_k\succeq \qzero,\notag\\
&& \|\sigma_h^2\text{vec}(\qQ_k); \sqrt{2}\sigma_h\qQ_k\hat\qh_k\| \le z_k, \notag\\
&& \sigma_h^2\tr(\qW) - \sqrt{-2\ln(1-\beta_k)}\nu_k + \ln(1-\beta_k)\mu_k\notag\\
&&\hspace{20mm} + \hat\qh_k^\dag \qW \hat\qh_k + N_0 - \frac{\tilde\lambda^2}{1-\rho_k} \ge 0,\notag\\
&&\mu_k\qI + \sigma_h^2\qW\succeq \qzero,\notag\\
&& \|\sigma_h^2\text{vec}(\qW); \sqrt{2}\sigma_h\qW\hat\qh_k\| \le \nu_k, \notag\\
&& \tr(\qW)\le P_t,\notag\\
&& 0\le \rho_k\le 1, \qW_k\succeq \qzero, \forall k,\notag\\
&& \tr(\qA_l \qW) \leq P_l, \forall l.\notag
\eea
By the definition of $\qW_k = \qw_k\qw_k^\dag$, $\qW_k$ should be of rank one, which has been relaxed to be positive semidefinite in P4. Since P4 is a convex problem, its solutions $\qW_k$ can be efficiently obtained via numerical software packages such as CVX, which are always optimal to the transformed problem \textbf{P4}, but not necessarily optimal to the original robust formulation of the SWIPT problem P3. If the solution of $\qW_k$ is of rank one, then it is also the optimal solution to the original problem \textbf{P1}. In such cases, the solution of $\qw_k$ can be derived based on $\qW_k$, where it is straightforward when the rank of $\qW_k $ is one. For the cases where the rank of $\qW_k $ is higher than one, the near-optimal solution of $\qw_k$ can be derived via the rank-one approximation of the $\qW_k $, e.g. via singular value decomposition methods \cite{ZL}. Since the transformed deterministic constraints are convex in $\qW_k, \rho_k, x_k, z_k, \mu_k$ and $\nu_k$, the original robust beamforming for SWIPT in \textbf{P3} has been transformed into a convex problem as in \textbf{P4}. Note that by using the implication in BTI in Lemma \ref{Lemma BTI}, the transformed constraints in \textbf{P4} characterize the lower bounds on the probability $\alpha_k$ and $\beta_k$. Therefore, the feasible solutions of \textbf{P4} are sub-optimal solutions of the original NP-hard problem \textbf{P3}, and can be efficiently solved using convex optimization solvers such as CVX, but the performance could be conservative.
	
\subsubsection{Deep Learning Based Method}\label{subsection deep learning based method}
	
\begin{figure}[t]\centering
  \includegraphics[width=\linewidth]{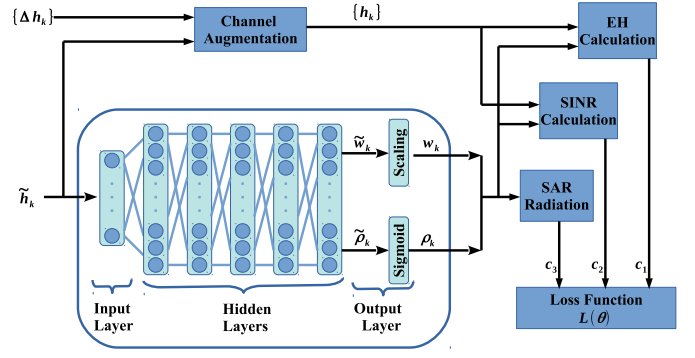}
  \caption{The proposed data-augmentation based training method for the robust beamforming problem \textbf{P1} for SWIPT under SINR, total power and SAR constraints.}\label{fig NN structure}
\end{figure}

Based on the above analysis, conventional solutions such as the BTI method might transform the probabilistic constraints to a more tractable form, but at the cost that the solutions are expected to be conservative comparing to the optimal solutions. Instead of seeking sub-optimal solutions via the conventional techniques like BTI, in this subsection, we will exploit the power of the deep learning neural networks (NNs), which learn from the data to form the robust beamforming strategies for the studied SWIPT under SINR, total power and SAR constraints. A general design of the training method is illustrated in Fig. \ref{fig NN structure}. Specifically, we will present the data-augmentation based technique to transform the probabilistic constraints to an NN training problem, and then reformulate the optimization problem with multiple constraints to a deep learning problem. Note that our method can deal with general channel error distributions and is not limited to Gaussian errors.
	
The general objective of the deep learning method is to train an NN with the estimated channels $\hat \qh_1, \dots, \hat \qh_K$ as inputs, and it will output the beamforming solutions of \textbf{P3} as follows
\begin{equation}
\{\qw_1, \dots, \qw_K, \rho_1, \dots, \rho_K \} = f(\hat \qh_1, \dots, \hat \qh_K;\qtheta),
\end{equation}
where $\qtheta$ denotes the parameter set of the NN. In this way, it transforms the original non-convex optimization problem about the beamforming $\{\qw_k\}$ to finding the optimal parameter set $\qtheta$. However, note that the deep learning method cannot automatically address the constraints, which should be addressed by a deliberate design of the training procedure.

According to the SINR definition in \eqref{eq gamma}, the calculation of $\Gamma_k$ requires the estimated CSI $\qh_k$, the beamforming vector $\qw_j$ for $j=1,\dots, K$, and the PS parameter $\rho_k$. Without loss of generality, we can rewrite $\Gamma_k$ as a function of the NN $f(\hat \qh_1, \dots, \hat \qh_K;\qtheta)$ as follows:
\begin{align}\label{eq gamma nn}
&\Gamma_k(\hat \qh_k, \Delta \qh_k, f(\hat \qh_1, \dots, \hat \qh_K;\qtheta))\nonumber\\
&\quad\qquad\qquad\triangleq \frac{{\rho_k}|\qh_{k}^\dag\qw_k|^2}{ {\rho_k}\left(\sum_{j=1,j \ne	k}^K |\qh_k^\dag\qw_j|^2 + N_0\right) + {N_C}}.
\end{align}
Similarly, the EH power of receiver $k$ in \eqref{eq lambda} can be rewritten as
\begin{align}\label{eq lambda nn}
\Lambda_k(\hat \qh_k, \Delta \qh_k, f(&\hat \qh_1, \dots, \hat \qh_K;\qtheta))\nonumber\\
&\triangleq (1-\rho_k) \left( \sum_{j=1}^K |\mathbf{h}_k^{\dagger} \mathbf{w}_j|^2 + N_0 \right).
\end{align}
The SAR function can be also rewritten as follows
\begin{equation}
S_l(f(\hat \qh_1, \dots, \hat \qh_K;\qtheta)) \triangleq \sum_{k=1}^K \qw_k^\dag \qA_{l} \qw_k.
\end{equation}
	
\paragraph{Problem Reformulation via Quantile Functions}
Similar to the analytical studies in the previous sections, it is also a challenging problem to address the probabilistic constraints in the NN. To facilitate the evaluation of the probabilistic constraints, we first introduce the quantile function $q(x, \sigma )$ as follows,
\begin{equation}
q(x, \sigma) = \inf \{z | \PP\{x \leq z\} \leq \sigma\},
\end{equation}
where the quantile function $q(x, \sigma)$ returns the quantile value (infimum value) $z$, such that for all $x$, the probability $\PP\{x \leq z\}$ is no more than the value of $\sigma$. In this way, given a probabilistic constraint in the following form
\begin{equation}
\PP\{x \leq Z\} \leq \sigma,
\end{equation}
then it can be rewritten by the quantile function in the equivalent form as follows
\begin{equation}
Z - q(x, \sigma ) \leq 0,
\end{equation}
which transforms the comparison from the probability $\PP\{x \leq Z\}$ against $\sigma$, to the quantile value $q(x, \sigma )$ against the threshold $Z$. Therefore, for the SINR constraint \eqref{constraint:prob3:SINR} of receiver $k$, it can be first rewritten into the following form
\begin{equation}
\PP\{\Gamma_k \le \gamma_k\}\le 1-\alpha_k.
\end{equation}
Then, with the quantile function, it can be further transformed as follows
\begin{equation}
\gamma_k - q(\Gamma_k(\hat \qh_k, \Delta \qh_k, f(\hat \qh_1, \dots, \hat \qh_K;\qtheta)), 1-\alpha_k ) \leq 0,
\end{equation}
where the NN representation of $\Gamma_k$ in \eqref{eq gamma nn} has been used.

Similarly, the EH constraint \eqref{constraint:prob3:EH} of receiver $k$ can be transformed as follows
\begin{equation}
\lambda - q(\Lambda_k(\hat \qh_k, \Delta \qh_k, f(\hat \qh_1, \dots, \hat \qh_K;\qtheta)), 1-\beta_k ) \leq 0.
\end{equation}

\setcounter{equation}{37}
\begin{figure*}[t!]{\small\begin{align}\label{eq loss function}
		\mathcal{L}(\qtheta) &= -c_1 \min \E_{\hat \qh_k, \Delta \qh_k, \forall k} \{q(\Lambda_k(\hat \qh_k, \Delta \qh_k, f(\hat \qh_1, \dots, \hat \qh_K;\qtheta)), 1-\beta_k)\}\nonumber\\
		&\quad + c_2 \sum_{k=1}^K \left(\E_{\hat \qh_k, \Delta \qh_k, \forall k} \{\gamma_k - q(\Gamma_k(\hat \qh_k, \Delta \qh_k, f(\hat \qh_1, \dots, \hat \qh_K;\qtheta)), 1-\alpha_k)\} \right)_{+} + c_3 \sum_{l=1}^L \left(S_l(f(\hat \qh_1, \dots, \hat \qh_K;\qtheta)) - P_l\right)_{+}.
		\end{align}}
	\setcounter{equation}{39}
	{\small\begin{align}\label{eq gd update theta}
		\qtheta^{(t)} &= \qtheta^{(t-1)} +\frac{c_1}{|\mathcal{S}_{\tilde \qh_k}|} \sum_{\tilde \qh_k \in \mathcal{S}_{\tilde \qh_k}} \frac{1}{|\mathcal{S}_{\Delta \qh_k}|} \sum_{\Delta \qh_k \in \mathcal{S}_{\Delta \qh_k}} \nabla_{\qtheta} \min \{q(\Lambda_k(\hat \qh_k, \Delta \qh_k, f(\hat \qh_1, \dots, \hat \qh_K;\qtheta)), 1-\beta_k)\}\nonumber\\
		&\quad- \sum_{k=1}^K \frac{c_2}{|\mathcal{S}_{\tilde \qh_k}|} \sum_{\tilde \qh_k \in \mathcal{S}_{\tilde \qh_k}} \frac{1}{|\mathcal{S}_{\Delta \qh_k}|} \sum_{\Delta \qh_k \in \mathcal{S}_{\Delta \qh_k}} \nabla_{\qtheta} \left(\gamma_k - q(\Gamma_k(\hat \qh_k, \Delta \qh_k, f(\hat \qh_1, \dots, \hat \qh_K;\qtheta)), 1-\alpha_k)\right)_{+}\nonumber\\
		&\quad- \sum_{l=1}^L \frac{c_3}{|\mathcal{S}_{\tilde \qh_k}|} \nabla_{\qtheta} \left( S_l(f(\hat \qh_1, \dots, \hat \qh_K;\qtheta)) - P_l\right)_{+}.
		\end{align}}\hrulefill\end{figure*}
\setcounter{equation}{32}
	
By noticing $\lambda$ is also the objective of the original robust formulation in \textbf{P3}, we further rewrite the robust formulation of SWIPT based on NN $f(\hat \qh_1, \dots, \hat \qh_K;\qtheta)$ as follows
{\small
\bea\label{eqn:prob:robust NN or}
\hspace*{-10mm}\textbf{P5:}&&\hspace{-5mm} \max_{\qtheta} \mathbb{E}_{\hat \qh_k, \Delta \qh_k, \forall k} \min\notag\\
&&\hspace{2mm}\{q(\Lambda_k(\hat \qh_k, \Delta \qh_k, f(\hat \qh_1, \dots, \hat \qh_K;\qtheta)), 1-\beta_k)\}\label{obj:prob5}\\
\hspace*{-10mm}\mbox{s.t.}&&\hspace*{-5mm}\mathbb{E}_{\hat \qh_k, \Delta \qh_k, \forall k} ~ \min\notag\\
&&\hspace*{-5mm}\{\gamma_k - q(\Gamma_k(\hat \qh_k, \Delta \qh_k, f(\hat \qh_1, \dots, \hat \qh_K;\qtheta)), 1-\alpha_k)\} \leq 0,\label{constraint:prob5:SINR} \\
&&\hspace*{-5mm}0\le \rho_k\le 1, \forall k, \label{constraint:prob5:rho} \\
&&\hspace*{-5mm}\sum_{k=1}^K \|\qw_k\|^2 \le P_t, \label{constraint:prob5:totalpower} \\
&&\hspace*{-5mm}S_l(f(\hat \qh_1, \dots, \hat \qh_K;\qtheta)) \leq P_l, \forall l, \label{constraint:prob5:SAR}
\eea}\vspace{-5mm}

\noindent where the mathematical expectations in objective \eqref{obj:prob5} and the SINR constraint \eqref{constraint:prob5:SINR} are with respect to ${\hat \qh_k, \Delta \qh_k, \forall k}$. This is to make the trained NN parameter set $\qtheta$ generally applicable to all possible $\Delta \qh_k$ and estimated CSI inputs $\hat \qh_k$. Note that since $\qw_k$ and $\rho_k$ are the outputs of the NN $f(\hat \qh_1, \dots, \hat \qh_K;\qtheta)$, the deterministic constraints \eqref{constraint:prob5:rho}--\eqref{constraint:prob5:SAR} can be regarded as the constraints on the NN outputs.

\paragraph{Addressing the constraints} During the training procedure, the NN cannot automatically satisfy the constraints. Therefore, for the problems with constraints, e.g., the studied robust beamforming for the SWIPT problem, each constraint needs to be addressed deliberately. Similar to the analysis in the convex optimization, there are no universal solutions to address all constraints. Some simple constraints can be addressed via the design of the output layer of the NN architecture. Here we will use two output layer design techniques to address the constraint \eqref{constraint:prob5:rho} and \eqref{constraint:prob5:totalpower} as follows:
\begin{itemize}
  \item For the PS constraint in \eqref{constraint:prob5:rho}, each $\rho_k$ should be within the range of $[0,1]$. This can be achieved by applying the Sigmoid function in the output layer for each raw output $\tilde \rho_k$ as $\rho_k = \text{Sigmoid}(\tilde \rho_k)$, with $\text{Sigmoid}(x) = \frac{1}{1+e^{-x}}$.
  \item For the total power constraint in \eqref{constraint:prob5:totalpower}, the sum power of all $\qw_k$, calculated by $\sum_{k=1}^K \|\qw_k\|^2$, should be bounded by the total power $P_t$. This can be addressed by scaling the raw beamforming vectors $\tilde \qw_k$ as $\qw_k = Y \tilde \qw_k$, with $Y = \min\left\{1, \frac{P_t}{\sum_{k=1}^K \| \tilde \qw_k\|^2}\right\}$.
\end{itemize}

With the above output layer design, the constraints \eqref{constraint:prob5:rho} and \eqref{constraint:prob5:totalpower} are enforced by the NN architecture automatically, i.e., all outputs will satisfy both constraints. Therefore, when possible, it is preferable to exploit the NN architectures, e.g., the NN output layer, which firmly address the constraints. However, the above techniques can only address simple constraints such as \eqref{constraint:prob5:rho} and \eqref{constraint:prob5:totalpower}. For complicated ones, such as the probabilistic constraint \eqref{constraint:prob5:SINR} and the SAR constraint \eqref{constraint:prob5:SAR}, we exploit a general technique by modifying the objective function, so that the trained NN parameter set $\qtheta$ should learn to satisfy both via the training procedure. This is achieved by considering the penalty of violating the constraints in the loss function given by \eqref{eq loss function}, where $c_1$, $c_2$ and $c_3$ are positive weight parameters for each individual learning objective terms\footnote{The values of $c_1$, $c_2$ and $c_3$ are empirically determined by trial and error.}, and $(x)_{+} \triangleq \max\{x, 0\}$ is the clamp operation. The use of the clamp operation is to make the loss function $\mathcal{L}(\qtheta)$ increase only when the corresponding constraints are violated.

In this way, we can rewrite the robust beamforming for SWIPT under SINR, total power and SAR constraints as the following unconstrained deep learning problem:\setcounter{equation}{38}
\begin{equation}\label{eq prob6}
\textbf{P6:\quad} \min_{\qtheta} \mathcal{L} (\qtheta),
\end{equation}
where \textbf{P6} provides a transformed formulation of \textbf{P3} that can be solved by deep learning methods. Specifically, \textbf{P3} is first transformed to \textbf{P5} with the help of the quantile function in (27), which is then transformed to \textbf{P6} by integrating the constraints together with the objective function as the loss function that can be used for deep learning. Note that since the loss function $\mathcal{L}(\qtheta)$ exploits the end performance (the expected harvested power) to evaluate the outputs of the NN, the training process can apply the unsupervised training method, where there is no need to know the optimal beamforming vectors and the PS parameters as supervised labeled data in the supervised training method.
	
\paragraph{Data-augmentation Based Training Method}
In the studied robust beamforming problem of SWIPT with SINR, total power and SAR constraints, the most challenging task is to address the probabilistic constraints with regards to the EH and SINR constraints. In the previous sections, we have exploited the quantile function to transform the constraint with probability to a constraint with quantile values. Further, the loss function has been modified so that it is expected that these constraints can be learned by the NN. It can be observed by the loss function $\mathcal{L} (\qtheta)$ in \eqref{eq loss function}, that the channel estimation errors $\Delta \qh_k$ is only required to evaluate the outputs of the NN with regard to the probabilistic constraints, while the NN only calculates outputs based on the estimated CSI $\tilde \qh_k$.
	
Inspired by this, we can add an auxiliary module during the NN training procedure, where each estimated CSI $\tilde \qh_k$ is augmented by a set of channel estimation errors $\mathcal{S}_{\Delta \qh_k}$ to form a set of potential actual CSI $\mathcal{S}_{\qh_k} = \{ \qh_k | \qh_k = \tilde \qh_k + \Delta \qh_k, \forall \Delta \qh_k \in \mathcal{S}_{\Delta \qh_k}\}$, which is then used to evaluate the probabilistic related terms in the loss function $\mathcal{L} (\qtheta)$. The evaluation over the augmented set $\mathcal{S}_{\qh_k}$, provides an estimated performance of the loss function $\mathcal{L} (\qtheta)$, while it is expected to converge to the actual performance when the size of the set increases according to the law of large numbers. Similarly, the mathematical expectation calculation in \eqref{eq loss function} against the estimated CSI $\tilde \qh_k$ can be achieved by an evaluation of the loss function over a set of estimated CSI $\mathcal{S}_{\tilde \qh_k}$. With the loss function $\mathcal{L} (\qtheta)$ estimated over the estimated CSI set $\mathcal{S}_{\tilde \qh_k}$ and the augmented set $\mathcal{S}_{\qh_k}$ for each element in $\mathcal{S}_{\tilde \qh_k}$, the NN parameter set $\qtheta$ can be updated based on the gradient descent method \cite{sutskever2013importance} as \eqref{eq gd update theta}.

When the offline training phase is completed, the trained NN can be deployed for the application and the beamforming solution can be obtained by using the inference mode of the NN, i.e. the channels are used as inputs to the NN and the NN only performs forward calculations without a back-propagation process. Since matrix multiplication is the most computation-intensive operation, we estimate the computational complexity based on the required multiplications, while other operation time is ignored. If the NN has $L$ fully connected layers and the $l$-th layer has $\theta_l$ neurons, then the total computational complexity of the NN-based method can be estimated as $\mathcal{O}(\sum_{l=2}^L \theta_{l-1}\theta_l)$. Besides the fixed number of neurons for the layers $l-2$ to $L-1$, the first layer input is of dimension $2KN_t$ and the last layer output is of dimension $2K(N_t+1)$. Therefore, the total computational complexity of the NN for the proposed method can be estimated as $\mathcal{O}(KN_t)$. The complexity of solving the general problems \textbf{P2} and \textbf{P4} is dominated by the SDP constraints and according to \cite[6.6.3]{Nemirovski}, the associated complexity of the interior-point algorithm for solving these two problems is $\mathcal{O}\left(\sqrt{KN_t}\left(K^3N_t^2+K^2N_t^3\right)\right)$.

\subsection{Simulation Results}
To evaluate the performance of the proposed algorithm, simulation results are presented and discussed in this section. The considered MISO downlink is composed of a transmitter with $N_t=3$ transmit antennas and $K=2$ receivers. Each transmit and receive antenna has $8$ dBi and $3$ dBi gain, respectively. Receivers can harvest energy at frequency $f=915$ MHz, and are randomly located around the transmitter with a distance $l_k\sim U(1,5)$ m and at a direction $\zeta_k \sim U(-\pi,\pi)$. We adopt Rician fading with a Rician factor of $0.5$ to model the channel, due to the short distance between the transmitter and the receivers and the dominance of the line-of-sight (LOS) signal; the path loss coefficient is $2.5$. We consider one SAR constraint, i.e., $L=1$, $P_t=2$ W, $N_0=-70$ dBm and $N_C=-50$ dBm. The SAR matrix is given below by \cite{Ying-CISS-13}\setcounter{equation}{40}
\be
\qA = \left[
\begin{array}{ccc}
	0.35 & -0.64-0.15j & -0.17+0.32j \\
	-0.64+0.15j & 2.51 & -0.31+0.29j \\
	-0.17-0.32i & -0.31-0.29j & 2.32
\end{array}
\right].
\ee

\begin{figure}[t]\centering
	\includegraphics[width=0.35\textwidth]{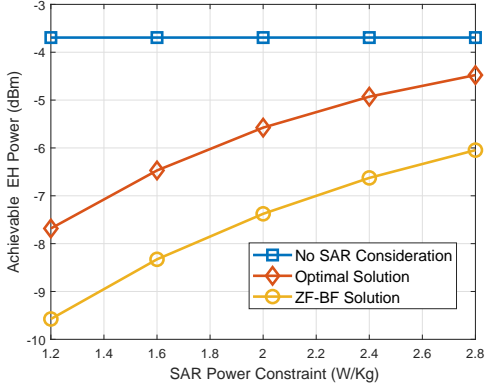}
	\caption{The harvested energy versus the SAR constraints.}\label{fig:EH:vs:SAR}
\end{figure}

\begin{figure}[t]\centering
	\includegraphics[width=0.35\textwidth]{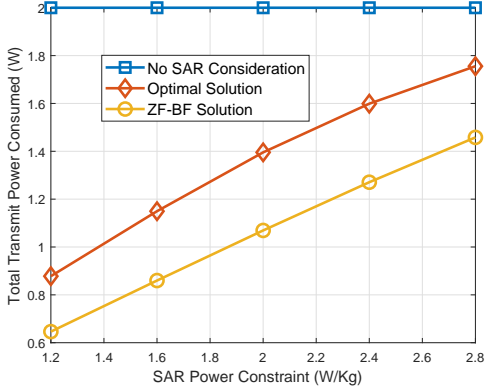}
	\caption{The total transmit power consumption versus the SAR constraints.}\label{fig:pow:vs:SAR}
\end{figure}

\begin{figure}[t]\centering
	\includegraphics[width=0.35\textwidth]{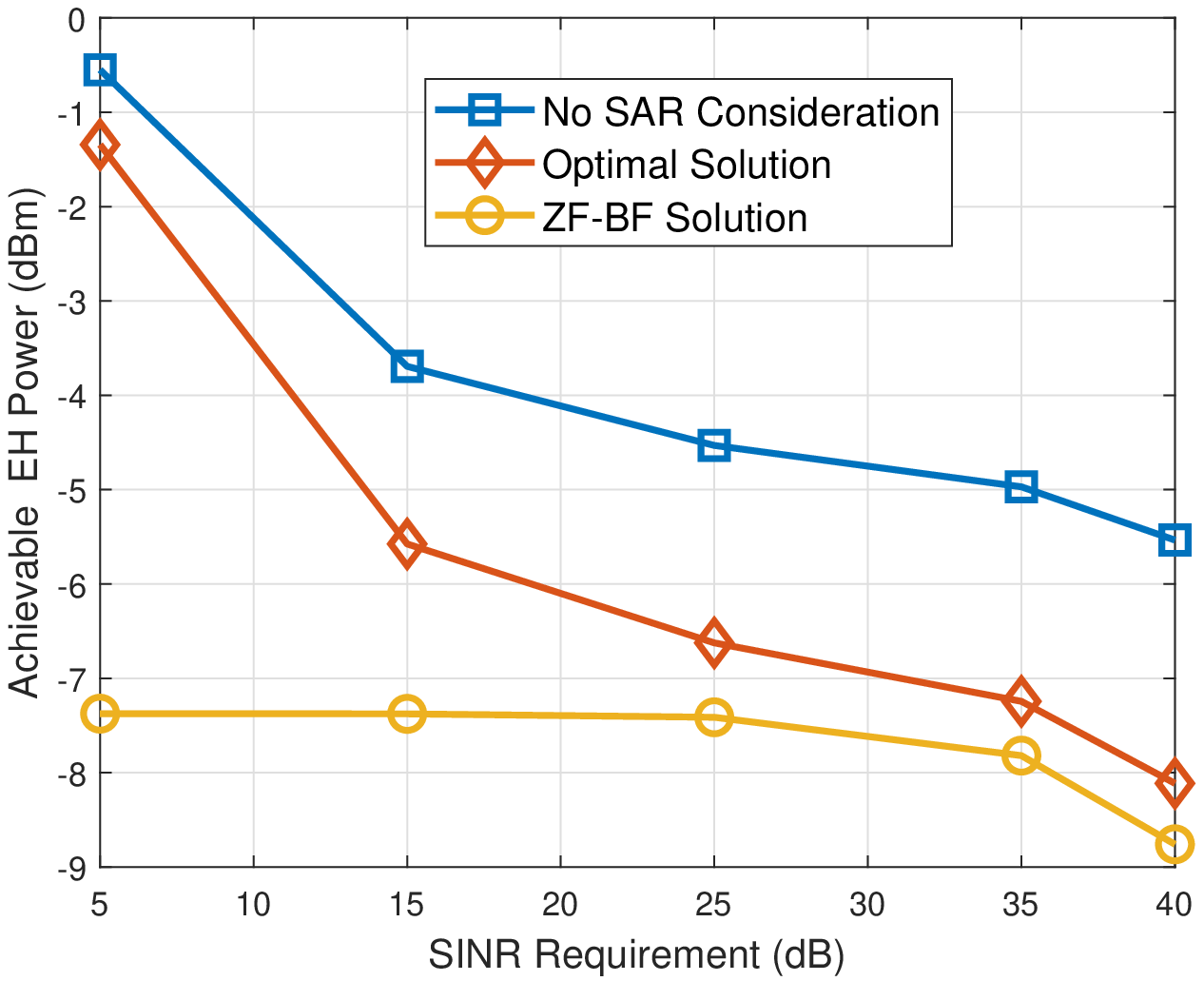}
	\caption{The harvested energy versus the SINR constraints.}\label{fig:EH:vs:SINR}
\end{figure}

\begin{figure}[t]\centering
	\includegraphics[width=0.35\textwidth]{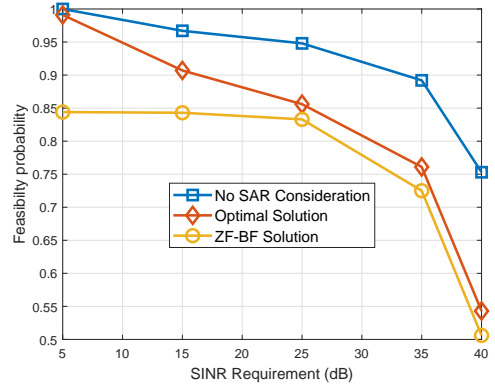}
	\caption{The feasible probability versus the SINR constraints.}\label{fig:fea:vs:SINR}
\end{figure}

\subsubsection{Optimal Beamforming with Perfect CSI}
For performance comparison with the proposed optimal solution with perfect CSI, we consider as benchmarks the zero-forcing beamforming (ZF-BF) and the solution without the SAR consideration (i.e., the solution of \textbf{P1} without considering the SAR constraint). Specifically, the ZF-BF vector is given by $\qw_k=\sqrt{p_k}\tilde\qw_k$, where $\tilde\qw_k=\frac{\left(\mathbf{I}_{N_t} - \mathbf{H}_k^\dag \mathbf{H}_k \right)\mathbf{h}_k}{\left\|\left(\mathbf{I}_{N_t} -\mathbf{H}_k^\dag \mathbf{H}_k \right)\mathbf{h}_k \right\|}$, and $p_k$ is the power for the $k$-th receiver. Let $G_{k,j}\triangleq|\mathbf{h}_{k}^{\dag}\tilde\qw_j|^2$ denote the equivalent link gain between the transmitter and the $k$-th receiver, which satisfies $G_{k,j}=0, \forall k\ne j$. Finally, let $F_{k,l} = {\tilde\qw_k}^\dag \qA_{l} \tilde\qw_k$ denote the $l$-th radiation channel gain due to the transmission intended for the $k$-th receiver.
\bea \label{eq:ZF}
\textbf{P7:} && \displaystyle{\max_{\{p_k\ge 0, \rho_k\}}} \lambda\\
\mbox{s.t.} && \frac{\rho_k G_{k,k} p_k}{\rho_k N_0 + N_C} \geq \gamma_k, \forall k,\notag\\
&& (1-\rho_k)(G_{k,k}p_k+N_0) \geq \lambda , \forall k,\notag\\
&& 0 \le \rho_k \le 1, \forall k,\notag\\
&&\sum_{k=1}^Kp_k \le P_t,\notag\\
&&\sum_{k=1}^K p_k F_{k,l} \leq P_l, \forall l.\notag
\eea
This is a linear programming problem and can be solved using CVX. But as will be illustrated by simulation results, the ZF-BF solutions are very conservative in the EH performance.

Fig. \ref{fig:EH:vs:SAR} shows the harvested power versus the SAR power constraint $P_l$, where the SINR requirement at the receivers is $15$ dB. The proposed optimal solution achieves higher harvested power as $P_l$ increases, and significantly outperforms the ZF-BF scheme. For instance, when $P_l=2$ W, the harvested energy of the proposed solution is about $-5.6$ dBm, which is $1.6$ dB greater than that of the ZF-BF solution. The solution without a SAR constraint remains constant and achieves a higher harvested power than the proposed solution, but the performance gap decreases by increasing $P_l$. Fig. \ref{fig:pow:vs:SAR} depicts the total power consumption at the transmitter for various $P_l$ when $P_t=2$ W. We can observe that the solution without a SAR constraint consumes the full transmission power ($2$ W) to maximize the harvested power. The proposed algorithm makes more effective use of the transmit power than the ZF-BF solution in order to harvest more power while satisfying the SAR constraint.

Fig. \ref{fig:EH:vs:SINR} shows the harvested power versus the SINR requirements, where the SAR power constraint is $2$ W. The harvested power of all solutions decreases as the SINR constraints increase, and this indicates that more power of the received signal is used for information decoding at the high SINR requirement, due to the nature of PS. Fig. \ref{fig:fea:vs:SINR} demonstrates the feasibility probability of the three solutions in terms of the SINR requirements, and the SAR power constraint is $2$ W. When the SINR requirement is low (i.e., $5$ dB), the feasibility probability of the optimal solution (about $99$\%) is close to that of the solution without SAR constraints and much higher than that of the ZF-BF solution (around $84$\%). In the high SINR regime, this gap reduces as the SINR requirement increases and this is because the ZF-BF becomes nearly optimal so both solutions converge.

\subsubsection{Robust Beamforming with Imperfect CSI}
Simulations are conducted to compare the performance of robust beamforming with the three previously discussed solutions when the available CSI is imperfect, which includes a) the non-robust solutions to the SWIPT formulation \textbf{P2}, without considering the channel estimation errors, which is solved via CVX, b) the BTI solutions to the robust formulation \textbf{P4}, and c) the proposed NN-based solution to the robust formulation \textbf{P6} via the proposed method in Fig. \ref{fig NN structure}. The variance of the channel estimation error is set as $\sigma_\qh^2= 10^{-5}$. Specifically, the SINR and SAR thresholds are the same for all receivers, where $\gamma_k = 5$ dB and $P_l = 1.2$ W/kg, while the total transmit power $P_t=2$ W is used. The probability constraints are set as $\alpha_k=95\%, \beta_k=90\%, \forall k$, unless otherwise specified.

\begin{figure}[t]\centering
	\includegraphics[width=0.5\textwidth,clip]{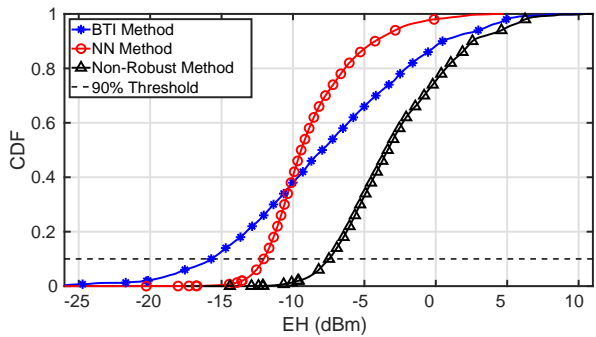}
	\caption{The EH performance comparison between the proposed NN method, the BTI method, and the non-robust method.}\label{fig EH}
\end{figure}

In the proposed NN training method, five fully connected hidden layers are constructed as illustrated in Fig. \ref{fig NN structure}. Each layer has a width of $300$, and the PReLU activation function is used except for the last layer, where $\text{PReLU}(x)$ returns $x$ if $x\ge 0$, or $0.25x$ otherwise. To support the NN training, the Batch normalization technique is used for each layer except for the last layer, and the Adam optimizer is used with a learning rate of $10^{-5}$. Since the proposed NN training method is based on unsupervised training, it only requires to generate the estimated CSI and channel estimation errors for the training purpose. In the simulations, we have randomly generated $4\times10^{5}$ estimated CSIs and exploited the mini-batch training method with a batch size of 4000. During the training procedure, the channel estimation error sets are randomly generated according to its distribution, where each estimated CSI is augmented with 200 channel estimation errors. The weight parameters are empirically selected as $c_1=10000$, $c_2=100$ and $c_3=100$. In addition, a total of 500 testing estimated CSIs are generated, which are used as inputs for the three considered methods, whose beamforming solutions are then evaluated against $10^4$ channel estimation errors for each estimated CSI.
	
We first evaluate the EH performance of the three methods, where the cumulative distribution function (CDF) of the least harvested energy among the receivers is shown in Fig. \ref{fig EH}. It is seen that the NN-based method shows a better EH performance than the BTI method, where the harvested power at the $90\%$ probability threshold is $-12$ dBm and $-15.7$ dBm, respectively. This corresponds to a $3.7$ dB gain for the NN-based method compared to the BTI method. As for the non-robust method, it shows the best EH performance among the three methods, but later it will be shown that the constraints have been violated, which makes the EH performance invalid to use.

\begin{figure}[t]\centering
  \includegraphics[width=0.5\textwidth,clip]{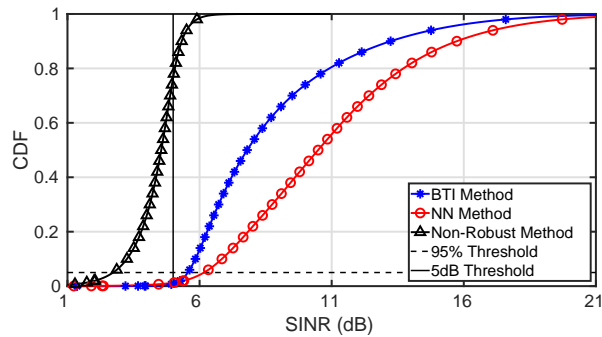}
  \caption{The SINR performance comparison between the proposed DNN method, the BTI method, and the non-robust method.}\label{fig SINR}
\end{figure}
	
The worst receiver's SINR performance is presented in Fig. \ref{fig SINR}, where the probability threshold of 95\% is indicated by the horizontal dashed line, and the 5 dB SINR threshold is indicated by the vertical solid line. Specifically, it is seen that both the NN-based method and the BTI method meet the probabilistic constraints regarding SINR, where their worst receiver's SINR have been maintained above 5 dB bound at the 95\% probability. Meanwhile, the NN-based method shows better performance compared to the BTI method, i.e. it provides a higher worst user's SINR at the probability threshold 95\%. This is due to the fact that the BTI method transforms the original robust formulation to a convex but conservative formulation, where the solutions can be effectively solved but at the cost of the sub-optimal solutions. On the other hand, the NN-based method empirically evaluates the probabilistic bounds during the training, and it learns to target at better EH performance, while satisfying the SINR requirements. As demonstrated in Figs. \ref{fig EH} and \ref{fig SINR}, both the NN-based and BTI methods, satisfy the robust beamforming constraints as in the original problem \textbf{P3}, while the NN-based method provides better solutions with higher EH performance and the BTI solutions are more conservative compared to the NN solutions. It is noticed that the non-robust method fails to satisfy the SINR constraint, and this is because the non-robust method does not consider the channel estimation errors during its formulation. By jointly considering the EH performance in Fig. \ref{fig EH} and the SINR performance in Fig. \ref{fig SINR}, the results of the non-robust method also indicate the importance of the robust formulation, as the uncertainty introduced by the channel estimation errors could compromise the performance, or even make the solutions invalid to use.

\begin{figure}[t]\centering
  \includegraphics[width=0.5\textwidth,clip]{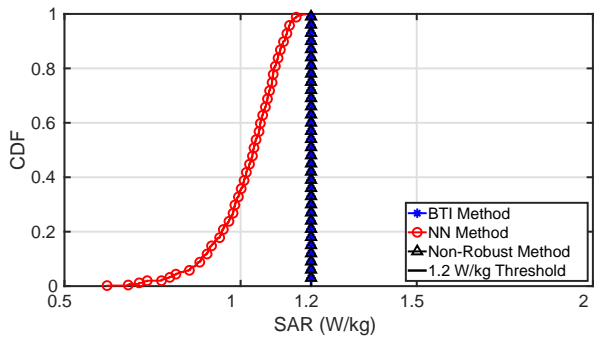}
  \caption{The SAR performance comparison between the proposed NN method, the BTI method, and the non-robust method.}\label{fig SAR}
\end{figure}
	
The SAR performance is shown in Fig. \ref{fig SAR}. It can be seen that the CDF of the BTI method and the non-robust method overlap with the threshold line at $P_l=1.2$ W/kg. This is because the SAR constraint is a linear constraint, while the transformed or relaxed formulations in both the BTI and the non-robust method are in convex forms. Therefore, the overlap with the SAR threshold $P_l=1.2$ W/kg indicates that the solutions via the BTI and non-robust method are based on the equality condition of the SAR constraint. On the contrary, the CDF of the NN method shows that the final SAR performance is within the range from 0.6 to 1.2 W/kg, and the SAR threshold has been met for all tests with regards to both estimated CSI and channel estimation errors. It is also noticed that although the formulation only requires a firm bound on the maximum SAR, the NN-based method can provide solutions with lower SAR radiations.

In summary, the trade-off among the SINR, EH and SAR has been shown to be a non-convex and NP-hard problem in the analytical studies in Section \ref{section robust design}, but the results in Figs. \ref{fig EH}--\ref{fig SAR} demonstrate that the proposed training method has successfully learned better solutions than the BTI method and the non-robust method.

\section{SWIPT with MPE Constraints:\\ Large-Scale Performance Analysis}\label{swipt_mpe}
In this section, we study a SWIPT network from a large-scale point-of-view under the MPE constraint.

\subsection{System model}\label{sec2}

\subsubsection{Network Model}
We consider a large-scale bipolar ad hoc wireless network consisting of a random number of transmitter-receiver pairs. The transmitters form a homogeneous Poisson point process (PPP) $\Phi = \{x_i : i \geq 1 \}$ of density $\la$ in a two dimensional Euclidean space $\R^2$, where $x_i \in \R^2$ denotes the location of the $i$-th transmitter. Each transmitter $x_i$ has a dedicated receiver at a distance $d_0$ in some random direction and transmits with fixed power $P_t$. The time is considered to be slotted and at each time slot all the transmitters are active without any coordination or scheduling. We consider the performance of a receiver located at the origin and its associated transmitter $x_0$. We perform our analysis for this typical receiver but, according to Slivnyak's Theorem \cite{HAE}, our results hold for any receiver in the network.

\subsubsection{Channel Model}
We assume that all wireless links suffer from both small-scale block fading and large-scale path-loss effects. The link between a receiver and an interfering node is in LOS with a probability $p_L$, otherwise it is blocked, e.g. by buildings. On the other hand, the link between a transmitter and its dedicated receiver is always considered to be in LOS, i.e. $p_L = 1$. The interference effect from non-LOS signals is ignored, as we assume the dominant interference is caused by the LOS signals \cite{JEFF}. We consider independent Nakagami fading with parameter $\mu$ for each LOS link and so the power of the channel fading is a normalized gamma random variable with shape parameter $\mu$ and scale parameter $1/\mu$. We denote by $h_i$ the channel gain for the link between the $i$-th transmitter and the typical receiver. Moreover, all wireless links exhibit AWGN with variance $N_0$. The path-loss model assumes that the received power is proportional to $d_i^{-\al}$ where $d_i$ is the Euclidean distance from the origin to the $i$-th transmitter and $\al > 2$ is the path-loss exponent.

\subsubsection{Sectorized Antenna Model}
The transmitters and receivers are equipped with multiple antennas and employ adaptive directional beamforming \cite{THO}. For the sake of analytical tractability, we make use of an approximation of an actual beam pattern using a sectorized model. In particular, the gain of a link between a transmitter and a receiver is a discrete random variable given by \cite{THO}
\begin{equation}
g = \begin{cases}
M^2 &\text{with probability $(\frac{\omega}{\pi})^2$},\\
Mm &\text{with probability $2\frac{\omega}{\pi}(1-\frac{\omega}{\pi})$},\\
m^2 &\text{with probability $(1-\frac{\omega}{\pi})^2$},
\end{cases}
\end{equation}
where $g$ takes into account three parameters: the main lobe beamwidth $\omega \in [0,\pi]$, the main lobe gain $M$, and the side lobe gain $m$. We let $q_i = \{(\frac{\omega}{\pi})^2, 2(\frac{\omega}{\pi})(1-\frac{\omega}{\pi}), (1-\frac{\omega}{\pi})^2\}$ denote the probability of a link with gain $g_i = \{M^2, Mm, m^2\}$. Finally, we assume that the link gain between each transmitter and its dedicated receiver is equal to $M^2$, i.e. they are perfectly aligned. Due to this sectorization, the PPP $\Phi$ is partitioned into three thinned spatial processes \cite{CP}, as follows: $\Phi_1$ represents the set of interferers with link gain $g_1$ with the typical receiver, i.e. $\Phi_1$ is a PPP with density $\la_1 = q_1\la$. Similarly, $\Phi_2$ and $\Phi_3$ are the set of interferers with link gains $g_2$ and $g_3$, respectively, with the typical receiver; as such, $\Phi_2$ and $\Phi_3$ are PPPs with density $\la_2 = q_2\la$ and $\la_3 = q_3\la$, respectively.

\subsection{Performance Analysis with MPE Constraints}
Based on the considered system model, the SINR at the typical receiver can be written as
\begin{align}
\sinr = \frac{\rho P_0 h_0 d_0^{-\al}}{\rho(N_0 + I)+N_C},
\end{align}
where $P_0 \triangleq g_1 P_t$ and
\begin{align}\label{interference}
I \triangleq \sum_{i=1}^3 P_i \sum_{x \in \Phi_i} \frac{h_x}{d_x^\al},
\end{align}
denotes the aggregate interference generated by the transmitters in $\Phi$ at the typical receiver, with $P_i = g_i P_t$. On the other hand, the instantaneous energy harvested at the typical receiver is given by \eqref{eqn:nonlinear} with
\begin{align}
P_r = P_0 h_0 d_0^{-\al} + \sum_{i=1}^3 P_i \sum_{x \in \Phi_i} \frac{h_x}{d_x^\al},
\end{align}
which is the aggregate received signal power at the receiver. Any potential RF EH from the AWGN noise is considered to be negligible.

In this section, we focus on the MPE of the network, evaluated at the origin. In other words, we would like to study the probability of satisfying the MPE constraint $\tau$, expressed as
\begin{align}
&p_s(\tau) = \PP\{\text{MPE} < \tau\},
\end{align}
where
\begin{align}\label{instMPE}
\mpe = \frac{P_0 h_0 d_0^{-\al}}{4\pi d_0^2} + \sum_{i=1}^3 P_i \sum_{x \in \Phi_i} \frac{h_x d_x^{-\al}}{4\pi d_x^2},
\end{align}
which follows from \eqref{mpe} and is referred to as the point source model, where the transmitting antenna is assumed to be represented by a single point source \cite{ITU}. Even though this model does not take into account the antenna size (assumed to be a point), it is accurate in the far-field \cite{ALOUINI,ITU}. We first state the following lemma and then provide the main result.

\begin{lemma}\label{lemma1}
The characteristic function of the interference $I$ is given by
\begin{align}\label{CF}
\phi(t,\la,P,\al) = \exp\left(\frac{2\pi\la}{\al} \left(-\frac{\jmath t P}{\mu}\right)^{\frac{2}{\al}} \mathrm{B}\left(-\frac{2}{\al},\mu+\frac{2}{\al}\right)\right),
\end{align}
where $\lambda$ and $P$ are the density and transmit power of the interfering nodes, respectively.
\end{lemma}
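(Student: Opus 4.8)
The plan is to compute the characteristic function $\phi(t,\la,P,\al) = \E\{\exp(\jmath t I)\}$ of the interference directly from its definition as a sum over a PPP. First I would invoke the probability generating functional (PGFL) of the Poisson point process, which is the standard tool for such computations. For a PPP $\Phi_i$ of density $\la$ in $\R^2$ and a mark (the fading $h_x$), the PGFL states that
\begin{align}
\E\left\{\prod_{x\in\Phi_i} f(x,h_x)\right\} = \exp\left(-\la\int_{\R^2}\left(1 - \E_h\{f(x,h)\}\right)\,\mathrm{d}x\right).
\end{align}
Applying this with $f(x,h_x) = \exp(\jmath t P h_x d_x^{-\al})$ reduces the problem to evaluating a single two-dimensional integral over the plane, after first taking the expectation over the Nakagami-induced gamma fading $h$.

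Next I would carry out the expectation over $h$, which is a normalized gamma random variable with shape $\mu$ and scale $1/\mu$. The inner expectation $\E_h\{\exp(\jmath t P h r^{-\al})\}$ is precisely the moment generating function of a gamma variable evaluated at the imaginary argument $\jmath t P r^{-\al}$, giving $(1 - \jmath t P r^{-\al}/\mu)^{-\mu}$. Converting the planar integral to polar coordinates (the angular part contributes a factor $2\pi$ and the radial part becomes $\int_0^\infty r\,\mathrm{d}r$), the exponent becomes
\begin{align}
-2\pi\la\int_0^\infty \left(1 - \left(1 - \frac{\jmath t P}{\mu r^\al}\right)^{-\mu}\right) r\,\mathrm{d}r.
\end{align}
The remaining task is to evaluate this radial integral in closed form.

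The main obstacle, and the technical heart of the proof, will be evaluating that radial integral and expressing the result via the Beta function $\mathrm{B}(-2/\al,\mu+2/\al)$. I would handle it with the substitution $u = r^\al$ (so that $r\,\mathrm{d}r = \tfrac{1}{\al} u^{2/\al - 1}\,\mathrm{d}u$), turning the integral into a form amenable to the standard identity $\int_0^\infty u^{s-1}(1-(1+a u^{-1})^{-\mu})\,\mathrm{d}u$, or after a further rescaling $u \mapsto (\jmath t P/\mu)\,v$, into a Beta-function integral. The factor $(-\jmath t P/\mu)^{2/\al}$ in the statement arises from pulling the scaling constant out of the integral through this substitution, and the Beta function with the negative first argument $-2/\al$ appears from the convergence of the tail (justified since $\al>2$ guarantees $2/\al<1$ and the integrand decays fast enough). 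Care is needed with the branch of the complex power $(-\jmath t P/\mu)^{2/\al}$ and with the fact that the integral converges only in the sense of analytic continuation; I would justify this either by the Gil-Pelaez-type regularization standard in stochastic geometry or by invoking the known result for the stable distribution of interference. The final step is simply to reassemble the exponent and recognize the Beta-function form quoted in \eqref{CF}.
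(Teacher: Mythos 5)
Your proposal is correct and reproduces the paper's result, and its first two steps coincide exactly with the paper's own proof: both invoke the probability generating functional of the PPP and then the gamma-fading MGF to reduce the characteristic function to the radial integral $\int_0^\infty\left(\left(1-\jmath t P/(\mu u^\al)\right)^{-\mu}-1\right)u\,du$. Where you genuinely diverge is in evaluating that integral, which is indeed the technical heart of the lemma. The paper implicitly assumes the Nakagami parameter $\mu$ is a positive integer: after the substitution $x=-\jmath tP/(\mu u^\al)$ it expands $(1+x)^{-\mu}-1=-\sum_{k=1}^{\mu}\binom{\mu}{k}x^k(1+x)^{-\mu}$ by the binomial theorem, evaluates each term as $\mathrm{B}\left(k-\tfrac{2}{\al},\mu-k+\tfrac{2}{\al}\right)$ via \cite[3.194-3]{GRAD} (every argument positive, so each term is a classical convergent beta integral), and then recombines the finite sum into $-\mathrm{B}\left(-\tfrac{2}{\al},\mu+\tfrac{2}{\al}\right)$ using the integral representation of the beta function. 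Your route instead rescales once and appeals to the single identity
\begin{align}
\int_0^\infty w^{\frac{2}{\al}-1}\left(1-\left(1+\tfrac{1}{w}\right)^{-\mu}\right)dw
=\frac{\mu\al}{2}\,\mathrm{B}\!\left(1-\tfrac{2}{\al},\mu+\tfrac{2}{\al}\right)
=-\mathrm{B}\!\left(-\tfrac{2}{\al},\mu+\tfrac{2}{\al}\right),
\end{align}
which follows from one integration by parts together with $\Gamma(1-s)=-s\,\Gamma(-s)$; this is more direct and, notably, valid for any real $\mu>0$, so your argument is strictly more general than the paper's. What the paper's expansion buys in exchange is that every intermediate quantity is a beta function with positive arguments, so the analytically continued $\mathrm{B}\left(-\tfrac{2}{\al},\mu+\tfrac{2}{\al}\right)$ appears only at the very last step. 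Both arguments share the same glossed-over subtlety: the substitution involving the purely imaginary constant $\jmath tP/\mu$ is really a rotation of the integration contour and needs a justification by analyticity and decay in a sector; the paper performs it silently, while you at least flag the branch and continuation issue. One small correction: the ``Gil-Pelaez-type regularization'' you mention is not the right tool here --- Gil-Pelaez is the inversion theorem used afterwards in Theorem \ref{MPE_prob} to pass from the characteristic function to the probability, not a device for making this contour rotation rigorous.
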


\begin{proof}
The proof is given in Appendix \ref{lemma1_prf}.
\end{proof}

\begin{theorem}\label{MPE_prob}
The probability of satisfying the MPE constraint $\tau$ is given by
\begin{align}\label{thm1}
p_s(\tau) = \frac{1}{2} - \frac{1}{\pi} \int_0^\infty \frac{1}{t}\Im \left\{\frac{\exp(-4\pi\jmath t \tau)\psi(t)}{(1-\jmath t P_0 d_0^{-\al-2}/\mu)^\mu}\right\} dt,
\end{align}
where
\begin{align}
\psi(t) = \prod_{i=1}^3 \phi(t,\la_i,P_i,\al+2),
\end{align}
and $\phi(t,\la_i,P_i,\al+2)$ is given by Lemma \ref{lemma1} with $\la_i = p_L q_i \la$ and $P_i = g_i P_t$.
\end{theorem}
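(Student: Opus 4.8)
The plan is to recognize $p_s(\tau)=\PP\{\mpe<\tau\}$ as a cumulative distribution function and to recover it from the characteristic function of $\mpe$ through the Gil--Pelaez inversion theorem. First I would clear the common factor $\tfrac{1}{4\pi}$ by setting $Y\triangleq 4\pi\,\mpe$, so that $p_s(\tau)=\PP\{Y<4\pi\tau\}$; this rescaling is exactly the source of the $\exp(-4\pi\jmath t\tau)$ term in the statement. Expanding $Y$ from \eqref{instMPE} gives
\begin{align}
Y = \underbrace{P_0 h_0 d_0^{-\al-2}}_{S} + \underbrace{\sum_{i=1}^3 P_i \sum_{x\in\Phi_i} h_x d_x^{-\al-2}}_{I'},
\end{align}
where the signal term $S$ depends only on the dedicated-link fading $h_0$, while the interference term $I'$ depends only on the point process $\Phi$ and the interferers' fading; these two quantities are therefore independent.

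By independence the characteristic function of $Y$ factorizes as $\phi_Y(t)=\E\{e^{\jmath t S}\}\,\E\{e^{\jmath t I'}\}$. For the first factor, $h_0$ is a normalized gamma variable of shape $\mu$ and scale $1/\mu$, with characteristic function $(1-\jmath t/\mu)^{-\mu}$; scaling the argument by $P_0 d_0^{-\al-2}$ yields $\E\{e^{\jmath t S}\}=(1-\jmath t P_0 d_0^{-\al-2}/\mu)^{-\mu}$, precisely the reciprocal of the denominator in \eqref{thm1}. For the second factor I would note that $I'$ has exactly the structure of the interference $I$ in Lemma \ref{lemma1}, except that the extra $d_x^{-2}$ carried by the point-source MPE model raises the path-loss exponent from $\al$ to $\al+2$. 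Since only LOS interferers contribute, an independent thinning of each sub-process $\Phi_i$ by the probability $p_L$ replaces the density $q_i\la$ by $\la_i=p_L q_i\la$; and because the three LOS-thinned sub-processes are independent PPPs, the characteristic function of $I'$ is the product $\prod_{i=1}^3\phi(t,\la_i,P_i,\al+2)=\psi(t)$, each factor being supplied directly by Lemma \ref{lemma1}.

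Collecting the two factors gives $\phi_Y(t)=\psi(t)\,(1-\jmath t P_0 d_0^{-\al-2}/\mu)^{-\mu}$, and it remains to apply the Gil--Pelaez inversion formula
\begin{align}
\PP\{Y<y\} = \frac{1}{2} - \frac{1}{\pi}\int_0^\infty \frac{1}{t}\,\Im\left\{e^{-\jmath t y}\phi_Y(t)\right\}\,dt
\end{align}
with $y=4\pi\tau$, which reproduces \eqref{thm1}. Because $\mpe$ is a continuous random variable, the inversion returns $\PP\{Y<y\}$ with no jump correction needed. The main obstacle is not the inversion itself, which is mechanical, but the careful assembly of $\phi_Y$: one must track the $4\pi$ rescaling so that the threshold and exponent agree, justify invoking Lemma \ref{lemma1} with the shifted exponent $\al+2$ rather than $\al$, and correctly apply the independent-thinning and superposition properties of the PPP to obtain both the effective density $p_L q_i\la$ and the product form of $\psi(t)$.
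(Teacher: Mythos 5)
Your proposal is correct and follows essentially the same route as the paper's proof: rescale the MPE by $4\pi$, split it into the independent signal term (gamma fading, giving the $(1-\jmath t P_0 d_0^{-\al-2}/\mu)^{-\mu}$ factor) and the three thinned-PPP interference terms (handled by Lemma \ref{lemma1} with exponent $\al+2$), and apply Gil--Pelaez inversion. Your explicit remarks on the $p_L$ thinning and the absence of a jump correction are details the paper leaves implicit, but the argument is the same.
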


\begin{proof}
The proof is given in Appendix \ref{MPE_prob_prf}.
\end{proof}

Note that as $\mu$ increases, the above probability converges to a constant floor. Specifically, for $\mu \to \infty$, we have
\begin{align}\label{thm11}
\lim_{\mu\to\infty} p_s(\tau) \to \frac{1}{2} - \frac{1}{\pi} \int_0^\infty \frac{1}{t}\Im \left\{\frac{\exp(-4\pi\jmath t \tau)\psi(t)}{1-\jmath t P_0 d_0^{-\al-2}}\right\} dt,
\end{align}
with
\begin{align}\label{thm12}
\lim_{\mu\to\infty} \psi(t) \to \exp\left(\frac{2\pi\Gamma\big(\frac{-2}{\al+2}\big)}{\al+2}(-\jmath t)^{\frac{2}{\al+2}} \sum_{i=1}^3 \la_i P_i^{\frac{2}{\al+2}}\right),
\end{align}
where \eqref{thm11} follows from $(1-x)^a \to 1-ax$ for $x\to0$ and \eqref{thm12} follows from $\Gamma(x+a) \to \Gamma(x)x^a$ for $x \to \infty$.

Next, we are interested in studying both information and energy coverage probabilities when the MPE constraint is satisfied, i.e. the joint probabilities. Due to the correlation between the SINR and the energy harvested with MPE, we assume these events are independent for the sake of analytical tractability. Therefore, we consider the bounds
\begin{align}
\!\!\!\PP\{\mpe < \tau, \sinr > \gamma\} \geq \PP\{\mpe < \tau\}\PP\{\sinr > \gamma\},
\end{align}
and
\begin{align}
\PP\{\mpe < \tau, \e > \epsilon\} \leq \PP\{\text{MPE} < \tau\}\PP\{\e > \epsilon\},
\end{align}
where $\gamma$ and $\eps$ are non-negative thresholds for the SINR and the average harvested energy, respectively, and $\PP\{\sinr > \gamma\}$ and $\PP\{\e > \epsilon\}$ are given in the following corollary.

\begin{corollary}\label{inf_out_prob}
The information coverage probability is given by
\begin{align}
p_o(\gamma) = \frac{1}{2} - \frac{1}{\pi} \int_0^\infty \frac{1}{t}\Im \left\{\frac{\exp(\jmath t (N_C/\rho+N_0))\psi(t)}{(1+\jmath t P_0 d_0^{-\al}/(\gamma\mu))^\mu}\right\} dt,
\end{align}
and the energy coverage probability is given by
\begin{align}
p_e(\epsilon) = \frac{1}{2} + \frac{1}{\pi} \int_0^\infty \frac{1}{t}\Im \left\{\frac{\exp(-\jmath t \delta)\psi(t)}{(1-\jmath t P_0 d_0^{-\al}/\mu)^\mu}\right\} dt,
\end{align}
where $\psi(t) = \prod_{i=1}^3 \phi(t,\la_i,P_i,\al)$ and
\begin{align}\label{delta}
\delta = \frac{c\epsilon}{(1-\rho)(a-\epsilon-b/c)}.
\end{align}
\end{corollary}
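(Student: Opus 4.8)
The plan is to reduce each coverage probability to a tail probability of a sum of a gamma-distributed desired-signal term and the aggregate interference $I$, and then to invert it via the Gil--Pelaez theorem using the characteristic function supplied by Lemma \ref{lemma1}. For $p_e(\epsilon)$, I would first invert the rectifier model \eqref{eqn:nonlinear}: writing $u=(1-\rho)P_r$, the event $\{\e>\epsilon\}$ is, after clearing the positive denominator $u+\bar c$ and cancelling the $\bar b$ terms, equivalent to $u(\bar a-\epsilon-\bar b/\bar c)>\epsilon\bar c$; provided $\bar a-\epsilon-\bar b/\bar c>0$, this yields exactly the threshold $\delta$ of \eqref{delta}, so that $\{\e>\epsilon\}=\{P_r>\delta\}$. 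Since $P_r=P_0 h_0 d_0^{-\al}+I$ with the gamma fading $h_0$ (shape $\mu$, scale $1/\mu$) independent of $I$, the characteristic function of $P_r$ factorizes into the gamma factor $(1-\jmath t P_0 d_0^{-\al}/\mu)^{-\mu}$ and $\psi(t)=\prod_{i=1}^3\phi(t,\la_i,P_i,\al)$ from Lemma \ref{lemma1}; the Gil--Pelaez survival formula applied at level $\delta$ then produces the stated $p_e(\epsilon)$ directly.

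For $p_o(\gamma)$, I would rewrite $\sinr>\gamma$ as $\frac{P_0 h_0 d_0^{-\al}}{\gamma}-I>N_0+N_C/\rho$, a tail event for the variable $U\triangleq\frac{P_0 h_0 d_0^{-\al}}{\gamma}-I$ at level $N_0+N_C/\rho$. By independence of $h_0$ and $I$, the characteristic function of $U$ factorizes as $(1-\jmath t P_0 d_0^{-\al}/(\gamma\mu))^{-\mu}\,\psi(-t)$, where the argument $-t$ in the interference factor arises from the $-I$ term. Gil--Pelaez then gives an integral in terms of $\psi(-t)$ and $\exp(-\jmath t(N_0+N_C/\rho))$; to match the stated form I would invoke the conjugate symmetry $\psi(-t)=\overline{\psi(t)}$ (valid since $I$ is real) together with $\Im\{\bar z\}=-\Im\{z\}$. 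Conjugating the entire integrand simultaneously flips the leading sign from $+$ to $-$, replaces $\psi(-t)$ by $\psi(t)$, turns the exponential into $\exp(\jmath t(N_0+N_C/\rho))$, and converts the denominator into $(1+\jmath t P_0 d_0^{-\al}/(\gamma\mu))^{\mu}$, thereby recovering the claimed expression.

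The gamma characteristic function and the invocation of Lemma \ref{lemma1} are routine. The main obstacle is the careful bookkeeping of the sign of $t$ in the interference factor for the SINR case, and confirming that the single conjugation step transforms every factor consistently and leaves a genuinely real-valued integrand, as Gil--Pelaez requires. A secondary point worth verifying is the regime $\bar a-\epsilon-\bar b/\bar c>0$, which guarantees that $\delta$ in \eqref{delta} is well-defined and positive and that the direction of the inequality is preserved when multiplying through by $\bar a-\epsilon-\bar b/\bar c$.
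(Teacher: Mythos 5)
Your proof is correct and takes essentially the same approach the paper intends: the paper omits this corollary's proof, stating it follows the same steps as Theorem \ref{MPE_prob}, which is precisely what you do---inverting the rectifier nonlinearity to obtain the threshold $\delta$ (with the required regime $\epsilon < \bar a - \bar b/\bar c$), factorizing the characteristic function of the signal-plus-interference combination via independence and Lemma \ref{lemma1}, and applying Gil--Pelaez inversion. (Your conjugation step for $p_o(\gamma)$ is valid but avoidable: applying the Gil--Pelaez CDF form directly to $V \triangleq I - P_0 h_0 d_0^{-\alpha}/\gamma$ at the level $-(N_0 + N_C/\rho)$ yields the stated expression immediately, since $\phi_V(t) = \psi(t)\,(1+\jmath t P_0 d_0^{-\alpha}/(\gamma\mu))^{-\mu}$.)
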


The proof for the above expressions follows similar steps as the proof of Theorem \ref{MPE_prob} and thus it is omitted. It is important to point out that in \eqref{delta}, we need $\eps < \bar{a}- \bar{b}/\bar{c}$, that corresponds to the value at which the harvesting circuit saturates. We now derive the joint distribution of the MPE, the SINR and the average harvested energy. In other words, we evaluate
\begin{align}
\PP\{\mpe < \tau,& ~\sinr > \gamma, \e > \eps\}\nonumber\\
&\approx \PP\{\mpe < \tau\}\PP\{\sinr > \gamma, \e > \eps\}.
\end{align}

\begin{theorem}\label{joint_ccdf}
The joint information and energy coverage probability is given by
\begin{align}\label{thm3}
&p_J(\gamma, \eps) = \frac{1}{\pi\Gamma(\mu)}\int_0^\infty \frac{1}{t}\Im \Bigg\{\Bigg(\frac{\Gamma(\mu,\xi(\mu-\jmath t P_0 d_0^{-\al}))}{\exp(\jmath t \delta) (1-\jmath t P_0 d_0^{-\al}/\mu)^\mu}\nonumber\\
&-\frac{\exp(\jmath t (N_C/\rho+N_0)) \Gamma(\mu,\xi(\mu+\jmath t P_0 d_0^{-\al}/\gamma))}{(1+\jmath t P_0 d_0^{-\al}/(\mu\gamma))^\mu}\Bigg)\psi(t)\Bigg\} dt,
\end{align}
where $\psi(t) = \prod_{i=1}^3 \phi(t,\la_i,P_i,\al)$, $\delta$ is given by \eqref{delta} and
\begin{align}\label{low_limit}
\xi = \frac{\gamma d_0^\al}{P_0(1+\gamma)}\left(\delta + N_0 + \frac{N_C}{\rho}\right).
\end{align}
\end{theorem}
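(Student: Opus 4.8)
The plan is to express the joint event as a single threshold condition on the desired-link fading power $h_0$ conditioned on the aggregate interference $I$, and then to invert the characteristic function $\psi(t)=\prod_{i=1}^3\phi(t,\la_i,P_i,\al)$ of $I$ (Lemma \ref{lemma1}) via the same Gil--Pelaez inversion that underlies Theorem \ref{MPE_prob} and Corollary \ref{inf_out_prob}. Because $\sinr$ is increasing in $h_0$, the event $\{\sinr>\gamma\}$ is equivalent to $h_0>T_1(I):=\frac{\gamma d_0^\al}{P_0}\big(N_0+I+\tfrac{N_C}{\rho}\big)$; and since the rectifier model \eqref{eqn:nonlinear} is increasing in $P_r=P_0h_0d_0^{-\al}+I$, the event $\{\e>\eps\}$ is equivalent to $P_r>\delta$, i.e. $h_0>T_2(I):=\frac{d_0^\al}{P_0}(\delta-I)$, with $\delta$ from \eqref{delta} (finite for $\eps<\bar a-\bar b/\bar c$). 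The joint event is thus $\{h_0>\max(T_1(I),T_2(I))\}$. As $T_1$ increases and $T_2$ decreases in $I$, they cross once, at the $I^\star$ for which both equal $\xi$ of \eqref{low_limit} (one checks $T_1(I^\star)=T_2(I^\star)=\xi$). Hence the maximum is $T_2$ for $I<I^\star$ and $T_1$ for $I>I^\star$, yielding the exact decomposition
\begin{align}
p_J(\gamma,\eps)=\PP\{\e>\eps,\,I<I^\star\}+\PP\{\sinr>\gamma,\,I>I^\star\}.
\end{align}

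I would evaluate each term by conditioning on $h_0=x$, which necessarily exceeds $\xi$ in both regions, and writing the inner $I$-probability as a difference of values of the CDF $F_I$: with $c=d_0^\al/P_0$ and $w=N_0+N_C/\rho$, the second term gives $F_I(x/(\gamma c)-w)-F_I(I^\star)$ and the first gives $F_I(I^\star)-F_I(\delta-x/c)$. Substituting $F_I(y)=\frac12-\frac1\pi\int_0^\infty t^{-1}\Im\{e^{-\jmath ty}\psi(t)\}\,dt$ and interchanging the order of integration (justified by Fubini), the $x$-integral collapses onto the partial moment of the gamma density $f_{h_0}(x)=\frac{\mu^\mu}{\Gamma(\mu)}x^{\mu-1}e^{-\mu x}$, namely
\begin{align}
\int_\xi^\infty x^{\mu-1}e^{-\kappa x}\,dx=\kappa^{-\mu}\,\Gamma(\mu,\kappa\xi),\qquad \Re\{\kappa\}>0,
\end{align}
evaluated at the complex rates $\kappa=\mu-\jmath t/c$ (energy term) and $\kappa=\mu+\jmath t/(\gamma c)$ (SINR term). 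This is exactly the step that manufactures the complex-shifted incomplete gammas $\Gamma(\mu,\xi(\mu-\jmath tP_0d_0^{-\al}))$ and $\Gamma(\mu,\xi(\mu+\jmath tP_0d_0^{-\al}/\gamma))$ and the denominators $(1\mp\jmath tP_0d_0^{-\al}/\mu)^{\mu}$ appearing in \eqref{thm3}; the identity extends to $\Re\{\kappa\}=\mu>0$ by contour deformation.

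Carrying the two evaluations through, the first term becomes $\big(F_I(I^\star)-\tfrac12\big)\PP\{h_0>\xi\}+\frac{1}{\pi\Gamma(\mu)}\int_0^\infty t^{-1}\Im\{A(t)\psi(t)\}\,dt$ and the second becomes $\big(\tfrac12-F_I(I^\star)\big)\PP\{h_0>\xi\}-\frac{1}{\pi\Gamma(\mu)}\int_0^\infty t^{-1}\Im\{B(t)\psi(t)\}\,dt$, where $A(t)$ and $B(t)$ are the first and second summands of the bracket in \eqref{thm3}. The key point is that the two constant contributions, proportional to $\pm\big(\tfrac12-F_I(I^\star)\big)\PP\{h_0>\xi\}$, cancel on addition, which is why \eqref{thm3} carries no standalone $\tfrac12$; adding the integrals then gives $\frac{1}{\pi\Gamma(\mu)}\int_0^\infty t^{-1}\Im\{(A(t)-B(t))\psi(t)\}\,dt$, precisely \eqref{thm3}. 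I expect the main obstacle to be the careful bookkeeping of this step: fixing the correct branch and sign of $\kappa^{-\mu}\Gamma(\mu,\kappa\xi)$ for complex $\kappa$, confirming that the negative-argument values $F_I(\delta-x/c)$ for $x>c\delta$ are correctly absorbed by the oscillatory integral, and verifying the exact cancellation of the $\PP\{h_0>\xi\}$ terms. One should also note that if $I^\star\le0$ the energy constraint is slack, the region $\{I<I^\star\}$ is empty, and \eqref{thm3} persists by analytic continuation of the incomplete gamma.
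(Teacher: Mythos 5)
Your proposal is correct and follows essentially the same route as the paper's proof: both reduce the joint event to the interval condition $\delta - P_0 h_0 d_0^{-\al} < I < P_0 h_0 d_0^{-\al}/\gamma - N_C/\rho - N_0$ on the interference (non-empty precisely when $h_0>\xi$), apply Gil--Pelaez inversion with $\psi(t)$ built from Lemma \ref{lemma1}, and evaluate the truncated transform $\int_\xi^\infty x^{\mu-1}e^{-\kappa x}\,dx=\kappa^{-\mu}\Gamma(\mu,\kappa\xi)$ of the gamma density at the complex rates $\kappa=\mu-\jmath t P_0 d_0^{-\al}$ and $\kappa=\mu+\jmath t P_0 d_0^{-\al}/\gamma$, which produces the complex-argument incomplete gammas and the cancellation of the Gil--Pelaez constants. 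Your intermediate split at the crossing point $I^\star$ is only a bookkeeping variation, since the $F_I(I^\star)$ terms cancel upon addition and the computation collapses to the paper's single difference $F_I\big(P_0 h_0 d_0^{-\al}/\gamma - N_C/\rho - N_0\big)-F_I\big(\delta - P_0 h_0 d_0^{-\al}\big)$ averaged over $h_0>\xi$.
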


\begin{proof}
The proof is given in Appendix \ref{joint_ccdf_prf}.
\end{proof}

\begin{figure}[t]\centering
	\includegraphics[width=0.85\linewidth]{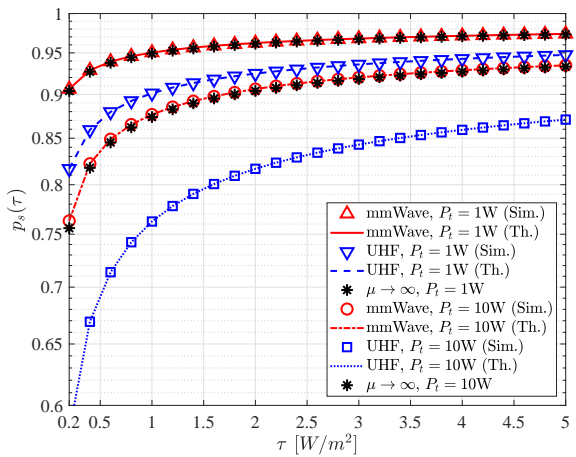}
	\caption{Probability of satisfying the MPE versus the constraint value $\tau$.}\label{fig1}
\end{figure}

Finally, we consider the special case where the interference does not exist. This can be realized in two ways: a) in a blockage dense area, i.e. $p_{\rm L} \to 0$, and b) small beamwidth and small side lobe gain, i.e. $\omega\to 0$ and $m \to 0$. In this case, we have
\begin{align}
&\hat p_J(\gamma, \eps) = \PP\{\mpe < \tau, \sinr > \gamma, \e > \eps\}\nonumber\\ &=\frac{1}{\Gamma(\mu)}\left(\Gamma\left(\mu,\frac{\mu d_0^\al}{P_0}\Xi\right)-\Gamma\left(\mu,\frac{4\pi\tau\mu d_0^{\al+2}}{P_0}\right)\right),
\end{align}
where $\Xi \triangleq \max\left(\gamma N_0+\frac{N_C}{\rho},\delta\right)$. The proof is omitted as it follows directly from the use of the CDF of a gamma random variable. By using the above, we can find the optimal $P_t$ that maximizes the joint distribution. The derivative is expressed as
\begin{align}
\frac{d}{dP_t} \hat p_J(\gamma, \eps) &= \frac{1}{\Gamma(\mu)}\Bigg(\frac{1}{P_t}\left(\frac{\mu d_0^\al\Xi}{M^2P_t}\right)^\mu \exp\left(-\frac{\mu d_0^\al\Xi}{M^2 P_t}\right)\nonumber\\
&- \frac{1}{P_t}\left(\frac{4\pi\tau\mu d_0^{\al+2}}{M^2 P_t}\right)^\mu \exp\left(-\frac{4\pi\tau\mu d_0^{\al+2}}{M^2 P_t}\right)\bigg),
\end{align}
which follows from $d\Gamma(a,x)/dx = -x^{a-1}\exp(-x)$. Then, by setting the derivative equal to zero and solving for $P_t$, we deduce that
\begin{align}
P_t^* = \frac{4\pi\tau d_0^{\al+2} - d_0^\al \Xi}{M^2 \ln \left(\frac{4\pi\tau d_0^2}{\Xi}\right)}.
\end{align}
Observe that the optimal value of $P_t$ is independent of the fading parameter $\mu$. As expected, $P_t^*$ increases with $\tau$ but decreases with $\Xi$ and $M$.

\subsection{Numerical Results}
We now evaluate the performance of SWIPT networks with MPE constraints and validate our analytical framework with Monte Carlo simulations. For the sake of comparison, we consider networks operating in millimeter wave (mmWave) bands (e.g. $30$ GHz) and in ultra high frequencies (UHF) (i.e. $300$ MHz to $3$ GHz). Following a similar approach to \cite{THO}, we use $M = 10$ dB for mmWave and $M=0$ dB for UHF, i.e. the mmWave antenna gain is ten-fold the one of UHF. Moreover, we consider $\mu=5, N_0 = -117$ dB, $p_\text{L} = 0.8$ (mmWave) and $\mu=1, N_0 = -127$ dB, $p_\text{L} = 1$ (UHF). Finally, the remaining parameters are set as: $\omega = \pi/6$, $m = -10$ dB, $d_0 = 5$ m, $\al=3$, $N_C = 0$ dB, $\gamma = -10$ dB, $\eps = -5$ dB and $\rho = 0.5$.

Fig. \ref{fig1} depicts the probability of satisfying the MPE constraint. As expected, the probability decreases with the transmit power and increases with the constraint $\tau$. We can observe that the mmWave band satisfies the constraint with a higher probability, compared to the UHF band with the same transmit power. Indeed, for $P_t = 10$W and $\tau = 0.2$ W/m$^2$, mmWave can satisfy the constraint around $75$\% of the time, whereas the RF exposure with UHF below $\tau$ is less than $60\%$ of the time. This shows the positive impact of directional beamforming, which can be achieved by higher frequencies, on the network's overall RF exposure. The figure also depicts the asymptotic scenario $\mu\to\infty$ (Eq. \eqref{thm11}). It can be seen that this provides a lower bound on $p_s(\tau)$. Finally, the analytical results (lines) and simulation results (markers) are in agreement, which verifies our analysis.

\begin{figure}[t]\centering
  \includegraphics[width=0.85\linewidth]{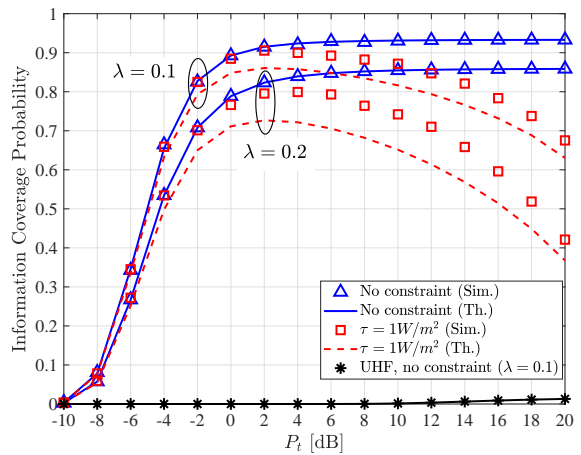}
  \caption{Information coverage probability versus the transmit power $P_t$.}\label{fig2}
\end{figure}

Fig. \ref{fig2} illustrates the information coverage probability in terms of the transmit power. When there is no MPE constraint, the probability converges to a constant ceiling for large values of $P_t$; this corresponds to the interference-limited scenario. Observe that mmWave networks significantly outperform UHF, as also shown in \cite{THO}. On the other hand, when an MPE constraint is imposed, the coverage probability decreases after a certain value of $P_t$ as the MPE level is exceeded more frequently. In other words, there is an optimal value that maximizes the coverage probability, which can be easily deduced by algorithms such as the bisection method. Similar observations can be derived for the energy coverage, shown in Fig. \ref{fig3}. In particular, with no MPE constraints, the probability converges to one for high values of $P_t$, whereas it drops after a certain value of $P_t$ when an MPE level is enforced. However, observe that the optimal $P_t$ here is different compared to the information case. Moreover, the UHF bands perform well in terms of energy coverage but are still outperformed by mmWave bands due to the higher antenna gain at the receivers and transmitters.

\begin{figure}[t]\centering
  \includegraphics[width=0.85\linewidth]{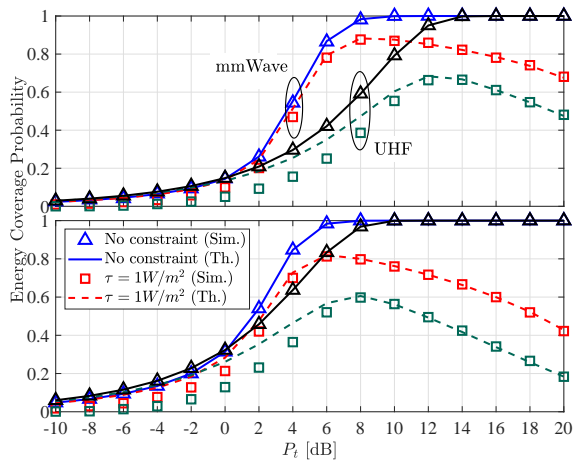}
  \caption{Energy coverage probability versus the transmit power $P_t$; $\la = 0.1$ (upper sub-figure) and $\la = 0.2$ (lower sub-figure).}\label{fig3}\vspace{-2mm}
\end{figure}

Finally, Fig. \ref{fig4} shows the joint coverage probability with respect to the LOS probability. It can be seen that the impact of LOS probability depends on the value of the transmit power. At low $P_t$, the performance improves with $p_\text{L}$ since this facilitates in harvesting more RF energy. Note that for these values, the performance loss from imposing safety constraints is small. On the contrary, for higher values of $P_t$, the coverage probability decreases with $p_\text{L}$ as the effect of interference is detrimental to the SINR. Also, the losses in performance due to MPE constraints are more notable in this case.

\section{Conclusion}\label{conclusion}
In this paper, we provided an framework for the design and analysis of far-field SWIPT under safety constraints. We focused on two RF exposure regulations, the SAR and the MPE, and outlined the state-of-the-art as well as the modeling approach in the context of communication networks. A design for optimal robust beamforming based on deep learning was proposed, subject to specific information, EH and SAR constraints. In addition, a complete theoretical study for the performance of large-scale SWIPT systems under the MPE constraint was derived, with regards to both information and energy coverage. Our results provide insights in terms of the optimal SWIPT design and show the potentials from the proper development of SWIPT systems under health and safety restrictions.\vspace{-2mm}

\begin{figure}[t]\centering
	\includegraphics[width=0.85\linewidth]{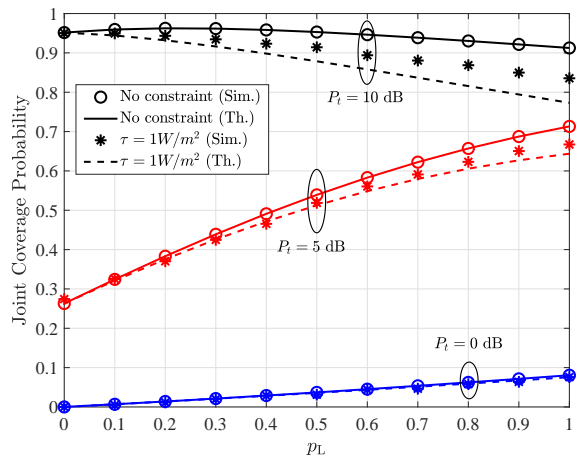}
	\caption{Joint coverage probability versus the LOS probability $p_L$.}\label{fig4}
\end{figure}

\appendix
\subsection{Proof of Theorem \ref{theo2}}\label{prf_thm1}
First the partial Lagrangian of the problem \textbf{P2} is written as
\bea
&& L(\{\qW_k, \rho_k,\alpha_k, \beta_k, \nu_l,\mu\}) = \mu\left(\sum_{k=1}^K \tr(\qW_k) - P_t\right) \notag\\
&&+ \sum_{l=1}^L \nu_l \left(\sum_{k=1}^K \tr(\qA_l \qW_k) - P_l \right) \notag\\
&&+ \sum_{k=1}^K\alpha_k \Bigg(\sum\limits_{j=1}^K \tr(\qh_{k}\qh_{k}^\dag\qW_j) + N_0\notag\\
&&\qquad\qquad\qquad+ \frac{N_k}{\rho_k} - \bigg(1+\frac{1}{\gamma_k}\bigg) \tr(\qh_{k}\qh_{k}^\dag\qW_k)\Bigg) \notag \\
&& + \sum_{k=1}^K\beta_k \left(\tilde\lambda^2 - (1-\rho_k) \left(\sum\limits_{j=1}^K \tr(\qh_{k}\qh_{k}^\dag\qW_j) + N_0 \right)\right) \notag\\
&& = \sum_{k=1}^K \tr(\qW_k \qX_k) - \sum_{l=1}^L \nu_l P_l + \sum_{k=1}^K\alpha_k \bigg(N_0 + \frac{N_i}{\rho_i}\bigg)\notag\\
&&\quad+ \sum_{k=1}^K\beta_k ( \tilde\lambda^2- (1-\rho_k)N_0),\notag
\eea
where $\{\alpha_k, \beta_k, \nu_l,\mu\}$ are dual variables, and we have defined
\begin{align}
\qX_k &\triangleq \mu\qI - \sum_{j=1}^K\beta_j(1-\rho_j) \qh_j\qh_j^\dag + \sum_{j=1}^K\alpha_j \qh_j\qh_j^\dag\nonumber\\
&\quad- \alpha_k \bigg(1+\frac{1}{\gamma_k}\bigg) \qh_{k}\qh_{k}^\dag+ \sum_{l=1}^L \nu_l \qA_l.
\end{align}
So the dual problem is
\bea
&&\max_{\bm\nu, \bm\alpha, \bm\beta, \mu\ge 0} ~ - \sum_{l=1}^L \nu_l P_l + \sum_{k=1}^K\alpha_k \left(N_0 + \frac{N_i}{\rho_i}\right)\notag\\
&&\qquad\qquad\quad+ \sum_{k=1}^K\beta_k (\tilde\lambda^2- (1-\rho_k)N_0)\notag\\
&& \mbox{s.t.} ~ \qX_k = \mu\qI + \sum_{j=1}^K (\alpha_j-\beta_j(1-\rho_j)) \qh_j\qh_j^\dag \notag\\
&&- \alpha_k \left(1+\frac{1}{\gamma_k}\right) \qh_k\qh_k^\dag + \sum_{l=1}^L \nu_l \qA_l\succeq \qzero, \forall k.\notag
\eea
Next, we prove that the matrix $\mu\qI + \sum_{j=1}^K (\alpha_j-\beta_j(1-\rho_j)) \qh_j\qh_j^\dag + \sum_{l=1}^L \nu_l \qA_l$ is full rank by contradiction. If it is not full-rank, suppose there exits a non-zero vector $\qx$ that satisfies $\qx^\dag(\mu\qI + \sum_{j=1}^K (\alpha_j-\beta_j(1-\rho_j)) \qh_j\qh_j^\dag + \sum_{l=1}^L \nu_l \qA_l)\qx=0$. Because $\qX_k\succeq \qzero$, we have
\begin{align}
\qx^\dag\qX_k\qx &= \qx^\dag \Bigg(\!\mu\qI + \sum_{j=1}^K (\alpha_j-\beta_j(1-\rho_j)) \qh_j\qh_j^\dag +\! \sum_{l=1}^L \nu_l \qA_l \!\Bigg)\qx \nonumber\\
&\quad-\qx^\dag\bigg(\alpha_k \bigg(1+\frac{1}{\gamma_k}\bigg) \qh_k\qh_k^\dag\bigg) \qx\notag\\
&= - \alpha_k \bigg(1+\frac{1}{\gamma_k}\bigg) |\qh_{k}^\dag\qx|^2 \ge 0.
\end{align}
Therefore, it holds true that $\qh_k^\dag \qx=0, \forall k$. It follows that
\bea
\qx^\dag\left(\mu\qI + \sum_{j=1}^K (\alpha_j-\beta_j(1-\rho_j)) \qh_j\qh_j^\dag + \sum_{l=1}^L \nu_l \qA_l\right)\qx\notag\\
= \qx^\dag\left(\qI + \sum_{l=1}^L \nu_l \qA_l\right)\qx > 0,
\eea
which contradicts the assumption that $\qx^\dag(\mu\qI + \sum_{j=1}^K (\alpha_j-\beta_j(1-\rho_j)) \qh_j\qh_j^\dag + \sum_{l=1}^L \nu_l \qA_l)\qx=0$. Therefore, the matrix $\mu\qI + \sum_{j=1}^K (\alpha_j-\beta_j(1-\rho_j)) \qh_j\qh_j^\dag + \sum_{l=1}^L \nu_l \qA_l$ must be full rank, and the rank of $\qX_k$ is at least $N_t-1$. One Karush-Kuhn-Tucker condition of the problem \textbf{P2} is that $\tr(\qW_k \qX_k)=0$, so the rank of $\qW_k$ is at most 1. This completes the proof.

\subsection{Proof of Lemma \ref{lemma1}}\label{lemma1_prf}
The characteristic function $\phi(t,\la,P,\al)$ of the interference $I = P\sum_{x \in \Phi} h_x d_x^{-\al}$ is given by
\begin{subequations}\begin{align}
&\phi(t,\la,P,\al) = \E\{\exp\left(\jmath t I\right)\} = \E_{\Phi, h_x}\!\left\{\exp\left(\jmath t P\sum\limits_{x\in \Phi} \frac{h_x}{r_x^\al}\right)\right\}\nonumber\\
&= \E_{\Phi} \prod\limits_{x\in \Phi} \E_{h_x}\left\{\exp\left(\jmath t P \frac{h_x}{r_x^\al}\right)\right\}\nonumber\\
&= \exp\left(2\pi\la \int_0^\infty \left(\E_{h_x}\left\{\exp\left(\jmath t P \frac{h_x}{u^\al}\right)\right\} - 1\right)u du \right)\label{cf1}\\
&=\exp\left(2\pi\la \int_0^\infty\left( \left(1-\frac{\jmath t P}{\mu u^\al}\right)^{-\mu} - 1\right)udu\right),\label{cf2}
\end{align}\end{subequations}
where \eqref{cf1} follows from the probability generating functional of a PPP \cite{HAE}; \eqref{cf2} from the moment generating function of a gamma random variable since $h_x$ are independent and identically distributed. By using the transformation $x = -\frac{\jmath t P}{\mu u^\al}$ and the binomial theorem, the integral can be written as
\begin{align}
&\int_0^\infty\left( \left(1-\frac{\jmath t P}{\mu u^\al}\right)^{-\mu} - 1\right)udu\nonumber\\
&=-\frac{1}{\al}\left(-\frac{\jmath t P}{\mu}\right)^\frac{2}{\al} \sum_{k=1}^\mu \binom{\mu}{k} \int_0^\infty\frac{x^{k-1-2/\al}}{(1+x)^\mu} dx\nonumber\\
&=-\frac{1}{\al}\left(-\frac{\jmath t P}{\mu}\right)^\frac{2}{\al} \sum_{k=1}^\mu \binom{\mu}{k} \mathrm{B}\left(k-\frac{2}{\al},\mu-k+\frac{2}{\al}\right),
\end{align}
which follows from \cite[3.194-3]{GRAD}. With the use of the integral representation of the beta function \cite[8.380-1]{GRAD}, we can write $\mathrm{B}(k-\frac{2}{\al},\mu-k+\frac{2}{\al}) = \int_0^1 x^{-\frac{2}{\al}-1}(1-x)^{\mu+\frac{2}{\al}-1} (\frac{x}{1-x})^kdx$. Then, it is easy to show that the above finite sum is equal to $-\mathrm{B}(-\frac{2}{\al},\mu+\frac{2}{\al})$, which completes the proof.

\subsection{Proof of Theorem \ref{MPE_prob}}\label{MPE_prob_prf}
We will derive the probability of satisfying the MPE constraint $\tau$ by applying the Gil-Pelaez inversion theorem \cite{GP}, that is,
\begin{align}\label{gil}
p_s(\tau) &= \PP\{\text{MPE} < \tau\}\nonumber\\
&= \frac{1}{2}-\frac{1}{\pi} \int_0^\infty \frac{1}{t} \Im\left\{\exp(-\jmath t x) \phi(t)\right\} dt,
\end{align}
where $\phi(t)$ is the characteristic function of the expression in \eqref{instMPE} evaluated at $t$. Therefore, we have
\begin{align}
p_s(\tau) &= \PP\left\{\frac{P_0 h_0}{d_0^{\al+2}} + \sum_{i=1}^3 P_i \sum_{x \in \Phi_i} \frac{h_x}{d_x^{\al+2}} < 4\pi\tau\right\}\\
&=\frac{1}{2}-\frac{1}{\pi} \int_0^\infty \frac{1}{t} \Im\left\{\exp(-\jmath t 4\pi\tau) \phi(t)\right\} dt.
\end{align}
Since \eqref{instMPE} is the sum of four independent terms, its characteristic function is given by the product of the characteristic function of each term. For the first term, we have
\begin{align}
\E\left\{\exp\left(\frac{\jmath t P_0 h_0}{d_0^{\al+2}}\right)\right\} = \left(1-\frac{\jmath t P_0}{\mu d_0^{\al+2}}\right)^{-\mu},
\end{align}
which follows from the fact that $h_0$ is a gamma random variable with shape and scale parameters $\mu$ and $1/\mu$, respectively. Finally, the characteristic function of the $i$-th term in the above sum is $\phi(t,\la_i,P_i,\al+2)$, given by Lemma \ref{lemma1}.

\subsection{Proof of Theorem \ref{joint_ccdf}}\label{joint_ccdf_prf}
The joint distribution can be evaluated as
\begin{align}
&p_J(\gamma, \eps) = \PP\{\sinr > \gamma, \e > \eps\}\nonumber\\
&= \PP\left\{\delta- P_0 h_0 d_0^{-\al} < I < \frac{P_0 h_0 d_0^{-\al}}{\gamma}-\frac{N_C}{\rho}-N_0\right\}\nonumber\\
&= F_I\bigg(\frac{P_0 h_0 d_0^{-\al}}{\gamma}-\frac{N_C}{\rho}-N_0\bigg)-F_I\bigg(\delta- P_0 h_0 d_0^{-\al}\bigg).\label{proof1}
\end{align}
By letting $A \triangleq \frac{P_0 d_0^{-\al}}{\gamma}$ and $B \triangleq \frac{N_C}{\rho}+N_0$, $F_I\left(h_0 A - B\right)$ can be expressed using the Gil-Pelaez inversion theorem as
\begin{subequations}\begin{align}
&F_I(h_0 A - B) = \frac{1}{2}- \frac{1}{\pi}\int_0^\infty \frac{1}{t}\nonumber\\
&\times \Im\left\{\exp(\jmath t B)\int_\xi^\infty\exp(-\jmath t A h) f_h(h) dh \phi(t)\right\} dt\label{cdf1}\\
&= \frac{1}{2}-\frac{1}{\pi} \int_0^\infty \frac{1}{t} \Im\left\{\exp(\jmath t B)\chi(A,\xi) \prod_{i=1}^3 \phi(t,\la_i,P_i,\al)\right\} dt\label{cdf2},
\end{align}\end{subequations}
where \eqref{cdf1} uses the probability density function $f_h(h) = \mu^\mu h^{\mu-1}\exp(-\mu h)/\Gamma(\mu)$ of a gamma random variable with parameters $\mu$ and $1/\mu$. The lower limit $\xi$ is derived by considering
\[\delta- P_0 h_0 d_0^{-\al} < \frac{P_0 h_0 d_0^{-\al}}{\gamma}-\frac{N_C}{\rho}-N_0,\]
and solving for $h_0$ gives \eqref{low_limit}. Then, in \eqref{cdf2}, $\phi(t,\la_i,P_i,\al)$ is given by Lemma \ref{lemma1} and
\begin{align}
\chi(A,\xi) &= \int_\xi^\infty\exp\left(-\jmath t A h\right) f_h(h) dh\nonumber\\
&=\left(1-\frac{\jmath P_0 d_0^{-\al}}{\mu}\right)^{-\mu} \frac{\Gamma\left(\mu, \xi (\mu - \jmath t P_0 d_0^{-\al})\right)}{\Gamma(\mu)},
\end{align}
which follows from \cite[3.381-3]{GRAD}. The second term in \eqref{proof1} can be evaluated with a similar way and the result follows after some algebraic manipulations.

\end{document}